\documentclass[10pt,journal,compsoc]{IEEEtran}


\ifCLASSOPTIONcompsoc
\usepackage[nocompress]{cite}
\usepackage{graphicx}
\usepackage{amsmath}
\usepackage{amssymb}
\usepackage{amsthm}
\usepackage{subfigure}
\usepackage{epsfig}
\usepackage[ruled,linesnumbered]{algorithm2e}
\usepackage{color, soul}
\graphicspath{{eps/}}
\newtheorem{theorem}{Theorem}
\newtheorem{lemma}{Lemma}
\theoremstyle{definition}
\newtheorem{defn}{Definition}




\newcommand{\rone}[1]{\definecolor{green}{RGB}{255,255,255} \sethlcolor{green}\hl{#1}}
\newcommand{\rtwo}[1]{\definecolor{yellow}{RGB}{255,255,255}  \sethlcolor{yellow}\hl{#1}}
\newcommand{\rthree}[1]{\definecolor{red}{RGB}{255,255,255} \sethlcolor{red}\hl{#1}}

\else
\fi

\ifCLASSINFOpdf

\else

\fi

\hyphenation{op-tical net-works semi-conduc-tor}

\begin{document}

\title{Failure Aware Semi-Centralized Virtual Network Embedding in Cloud Computing Fat-Tree Data Center Networks}

%

    \author{Chinmaya~Kumar~Dehury, ~\IEEEmembership{Member,~IEEE}, Prasan~Kumar~Sahoo, ~\IEEEmembership{Senior Member,~IEEE} 
    \thanks{Chinmaya Kumar Dehury was with the department of Computer Science and Information Engineering, Chang Gung University, Guishan, Taiwan.
    Currently, he is with the Mobile \& Cloud Lab, Institute of Computer Science, University of Tartu, Estonia. Email: chinmaya.dehury@ut.ee.}
    \thanks{Prasan Kumar Sahoo (Corresponding Author) is with Department of Computer Science and Information Engineering, Chang Gung University, Taiwan. He is an Adjunct Research Fellow in the Division of Colon and Rectal Surgery, Chang Gung Memorial Hospital, Linkou, Taiwan.
    Email: pksahoo@mail.cgu.edu.tw}
   }


\markboth{IEEE Transactions on Cloud Computing,~Vol.~xx, No.~x, month~year}%
{Shell \MakeLowercase{\textit{et al.}}: Bare Demo of IEEEtran.cls for Computer Society Journals}
%

\IEEEtitleabstractindextext{
\begin{abstract}
\rone{In Cloud Computing, the tenants opting for the
Infrastructure as a Service (IaaS) send the resource requirements
to the Cloud Service Provider (CSP) in the form of Virtual Network
(VN) consisting of a set of inter-connected Virtual Machines (VM).
Embedding the VN onto the existing physical network is known as
Virtual Network Embedding (VNE) problem. One of the major research
challenges is to allocate the physical resources such that the
failure of the physical resources would bring less impact onto the
users' service. Additionally, the major challenge is to handle the
embedding process of growing number of incoming users' VNs from
the algorithm design point-of-view. Considering both of the
above-mentioned research issues, a novel Failure aware
Semi-Centralized VNE (FSC-VNE) algorithm is proposed for the
Fat-Tree data center network with the goal to reduce the impact of
the resource failure onto the existing users. The impact of
failure of the Physical Machines (PMs), physical links and network
devices are taken into account while allocating the resources to
the users. The beauty of the proposed algorithm is that the VMs
are assigned to different PMs in a semi-centralized manner. In
other words, the embedding algorithm is executed by multiple
physical servers in order to concurrently embed the VMs of a VN
and reduces the embedding time. Extensive simulation results show
that the proposed algorithm can outperform over other VNE
algorithms.}


\end{abstract}

\begin{IEEEkeywords}
Cloud computing, Virtual Network Embedding (VNE), Fat-Tree data
center, resource mapping.
\end{IEEEkeywords}}

\maketitle

\IEEEdisplaynontitleabstractindextext

%
\IEEEpeerreviewmaketitle

\IEEEraisesectionheading{\section{Introduction}\label{sec:intro}}
\IEEEPARstart{V}{irtualization} is the key technology that enables
Infrastructure Provider (IP) to share the same physical servers or
Physical Machines (PMs) among multiple tenants by creating virtual
version of the resources such as storage, memory, CPU and network.
This allows the system developer to decouple the service instances
from the underline hardware resources by placing the
virtualization software between the user's operating system and
the underlined physical resources \cite{cerroni_2017}. On the
other hand, a large number of PMs in a data center is connected
through several types of switches and long cables. Fat-Tree
\cite{2017-1} network topology is widely used in commercial data
centers in order to establish the connection among PMs through
core switches, aggregation switches and edge switches as the basic
connecting devices. The resources of the PMs are provided to the
tenants by creating multiple Virtual Machines (VMs), ensuring the
basic privacy properties, better QoS and other user requirements
as mentioned at the time of Service Level Agreement (SLA)
establishment between the tenant and the IP \cite{2016-1}.



The network of multiple VMs that are requested by the cloud users
is called Virtual Network (VN). The job of the Cloud Service
Provider (CSP) is to provide a set of interconnected PMs onto
which the VN can be embedded, which is known as Virtual Network
Embedding (VNE) problem \cite{2016-ZehengYang}. This problem can also be relate to the problem of virtual network function placement \cite{infocom18-fei}.
The job of virtual network embedding is to embed VMs and virtual
links. VM embedding refers to as  selecting a suitable PM by taking
location, energy and users' budget constraint parameters into
consideration \cite{2017-5-Liang}. Virtual link embedding refers
to as finding the shortest path between the corresponding PMs
taking number of switches, load on the physical links, latency and
several other parameters into consideration.

%
The unexpected failure of resources is one of the major research issues that cannot
be ignored by the cloud service provider. Failure in resources has
a direct impact on the services. Service failures have direct
impact on QoS, energy consumption, SLA violation, and huge revenue
loss of CSP \cite{LI2018887}. One of the major research issues in VNE is
how to avoid the unexpected service failures due to the failure of
any hardware module or software module. The failure of resources
includes switch failure, physical link failure, and PM failure
\cite{NazariCheraghlou201681}. The major reasons behind the
resource failures could be the wrong input to the user program,
high workload on switches and PMs and failure of power supply.
It is reported that a single failure (or cloud outage) for one hour could bring more than \$336,000 financial loss to cloud providers, such as Amazon and Microsoft \cite{endo2017minimizing, cloudoutage2014}. The impact of the potential failure of any physical resource on service availability, productivity, business reputation, etc. \cite{Gunawi2016} must be analyzed before embedding the VN onto the physical network.
Here, we are considering the failure of PM, the physical network
link and the switches. The VMs of the incoming VN must be
distributed in such a way that impact of physical resource failure
is minimized by avoiding potential resource failure. \rthree{It is
also essential to design an algorithm that can embed the VNs
considering the failure impact of the physical resources. An
efficient VNE algorithm may consume significant time only for the
embedding purpose. The nature of the algorithm such as
centralized, distributed or semi-centralized also plays a
major role in making embedding process faster, resulting in better
QoS.}

%
In general, the VNE algorithm is installed in a dedicated server
to accomplish the embedding process with the responsibility to
embed all incoming VNs onto the physical network in a centralized
manner. The major flaws of the centralized embedding algorithms
are poor scalability, lower capital message, single point of
failure and higher embedding cost \cite{2015-1}. On the contrary,
distributed embedding algorithm reduces the embedding cost and
delay time, which occurs due to the arrival of a large number of
simultaneous VNs, but does not obtain the global view of the
physical network. \rthree{As distributed VNE algorithms are
scalable unlike the centralized one, they give better performance
as compared to the centralized approaches in the geo-distributed
Data Centers (DCs)}\cite{2015-1}. Despite several advantages of
the distributed VNE algorithms, obtaining an optimal embedding
solution without global view of the DC network incurs heavy
resource requirement and time consumption.
To overcome the disadvantages of the centralized and distributed
VNE algorithms, a semi-centralized algorithm may produce a
near-optimal embedding solution by considering the global view of
the DC network and also can embed multiple VNs in
parallel. The entire semi-centralized algorithm is divided into two sub-algorithms, which may run in two stages. If the centralized approach is followed by one sub-algorithm in the first stage, the distributed approach is followed by another sub-algorithm in the second stage. However, if the distributed approach is followed in the first stage, the centralized approach will be followed in the second stage.
Multiple constraints such as revenue maximization, guaranteeing
SLA with higher QoS and power consumption maximization need to be
taken into consideration while designing an efficient
semi-centralized VNE algorithm.

\subsection{Motivation}

As discussed before, VNE problem refers to as embedding of VN onto
a set of interconnected PMs. In the process of embedding, CSP
needs to consider multiple factors such as resource availability
of the PMs, bandwidth availability, utilization of the resources,
load balancing among all the PMs, energy consumption of the PMs,
network latency etc. Failure probability of the PMs is one of the
main factors, which plays a crucial role in the process of
embedding. Without considering the PMs' failure probability, CSP
may embed the VMs onto the PMs that are most likely to fail in the
next time instance. Failure of PMs results in failure of
corresponding VMs. As a result, the corresponding services would
be interrupted, which may bring a huge monetary loss for the CSP.
Though a number of research articles have been proposed recently
to address the VNE problems considering different factors, the
failure probability of the PMs and impact of resource failure onto
the existing users' request has not been studied extensively.

Fig. \ref{fig:motivation} illustrates the impact of resource
failure on the users' request. Let $VN1$ and $VN2$ be the two VNs
received by the CSP from two different users as shown in Fig.
\ref{fig:motivation}(a). $VN1$ consists of VMs $V1, V2,$ and $V3$.
$VN2$ consists of VMs $V4$ and $V5$. Considering the example given
in Fig. \ref{fig:motivation}, let the CSP consists of two PMs
$PM1$ and $PM2$ connected in Fat-Tree topology with number of
ports $k=2$. The two VNs $VN1$ and $VN2$ are embedded onto these
two PMs. VMs $V1, V2,$ and $V5$ from $VN1$ and $VN2$,
respectively, are embedded onto $PM1$. VMs $V3$ and $V4$ from VN
$VN1$ and $VN2$, respectively, are embedded onto $PM2$. In this
scenario, if physical machine $PM2$ is failed, it has direct
impact on both users and VM $V3$ and $V4$ will be affected, as
shown in Fig. \ref{fig:motivation}(b). Besides the impact on the
VMs, the virtual link between VM $V3$ and $V1$ and the virtual
link between VM $V4$ and $V5$ will be affected as the one of the
end VMs of those virtual links are inactive or not in running
state. As shown in Fig. \ref{fig:motivation}(c), the impact of the
failure of network resource onto the existing users is discussed.
Failure of a physical link between core switch $c$ and aggregation
switch $A2$ has also direct impact on all users, which can affect
all five virtual machines hosted by $PM1$ and $PM2$. Failure of
this physical link also affects the virtual links $(V1, V3)$ and
$(V4, V5)$ of $VN1$ and $VN2$, respectively. Though Fat-Tree
network topology  offers higher degree of fault tolerant service
through multiple physical paths between any two servers, however
in a much highly loaded network, failure of one link brings higher
impact on the corresponding servers and the VMs. It is assumed
that the virtual machines within a VN can be connected in any
topology.

\begin{figure}[t]
    \centering \epsfig{file=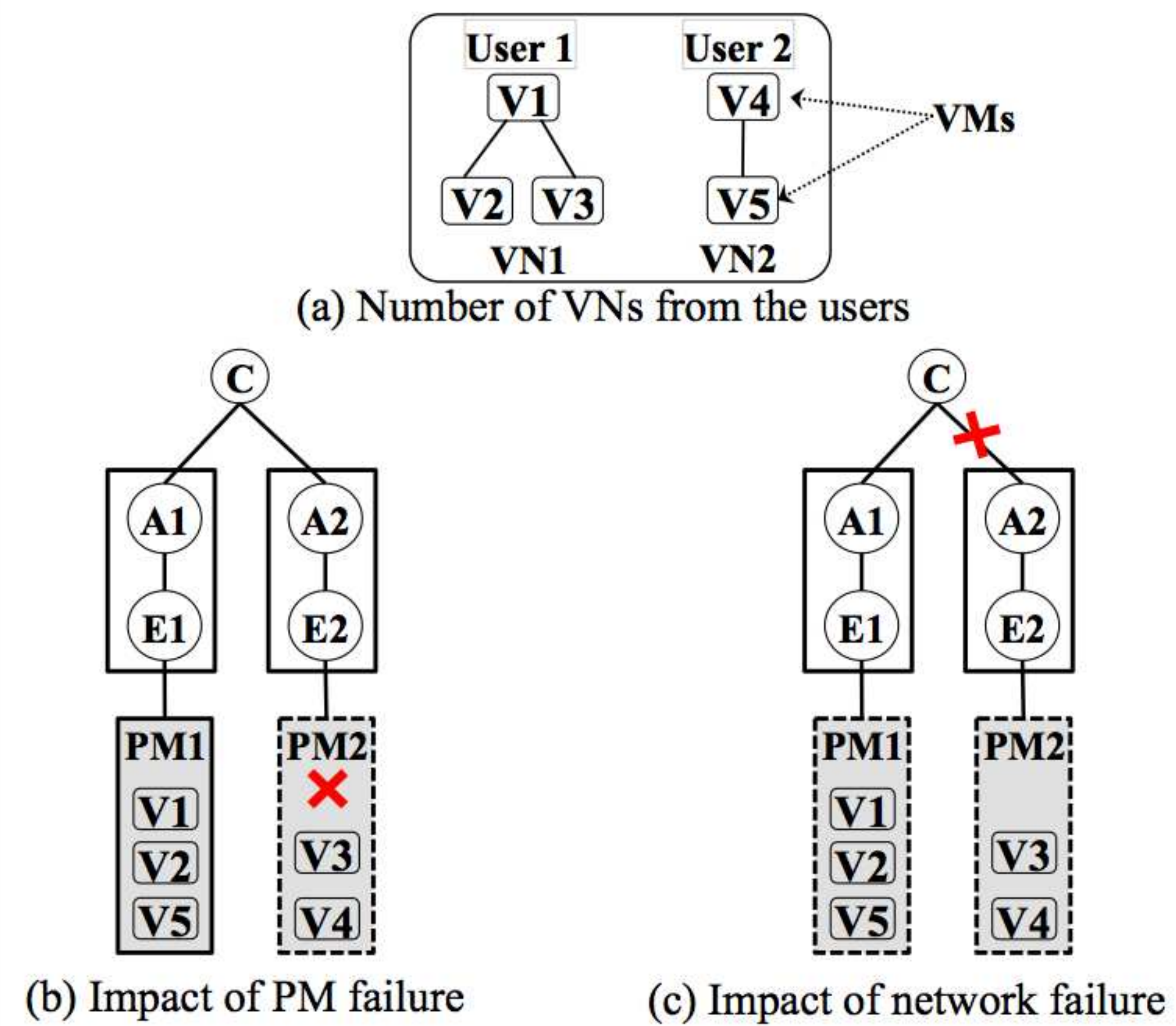,width=65mm}
    \vspace{-2mm}
    \caption{Impact of resource failure on VNE.} \vspace{-5mm}
    \label{fig:motivation}
\end{figure}

Further, several centralized VNE algorithms are not
scalable and are mainly designed to deploy on a single PM, which
is further responsible for receiving and processing the incoming
VNs \cite{LVRM-dehury, 2017-2}. This centralized approach
makes the algorithm in-scalable and therefore the new VNs may
experience delay as the single PM is busy in embedding other VNs
that arrived earlier. Here, in-scalable refers to as the lack of
ability to handle the growing number of incoming VNs to
accommodate itself according to a number of VNs. Besides, this
disadvantages in-terms of in-scalability and the knowledge of
global view of the entire physical network help the algorithm to
make the most efficient embedding solution in terms of the time
required to embed the VNs excluding the waiting time. The
distributed approach of VNE algorithm is highly scalable. Despite
its higher scalability nature, it is difficult to obtain an
efficient embedding solution as the global view is not available
for the PMs executing the embedding algorithm. Considering these
two traditional approaches and their disadvantages, it is
essential to design a VNE algorithm, which can be highly scalable
and must consider the global view of the entire physical network.

The above-mentioned scenario motivates us to revisit the VNE
problem by taking the failure of PMs as well as the physical links
into account. Accordingly, we have proposed a novel Failure aware
Semi-Centralized Virtual Network Embedding (FSC-VNE) scheme with
the following goals:
\begin{itemize}
    \item To design semi-centralized VNE
    algorithm that can meet the requirement of data centers with large volume of requests arrival rate.
    \item To minimize the impact of failure of PMs and physical links onto the existing users while embedding VNs by considering the failure probability of the PMs.
\end{itemize}

The goals can be achieved by embedding the VMs onto the PMs
considering the failure probability of each PM. Further, in order
to minimize the impact of physical link failure, multiple VMs can
be embedded onto a single PM, which eventually eliminates the
failure possibilities of corresponding virtual links.

Rest of this paper is organized as follows. In Section
\ref{sec:relWorks}, we present the recent research articles
related to VNE. The system model of this paper is presented in
Section \ref{sec:sysModel}. The proposed VNE algorithm is
discussed in Section \ref{sec:sol}. In Section
\ref{sec:perfEvaluation}, performance evaluation of our proposed
algorithm is made followed by the concluding remarks in Section
\ref{sec:conclsn}.

\section{Related Works} \label{sec:relWorks}
Extensive researches have been emphasized on different aspects of VNE problem, such as energy-aware \cite{IWQos16-Xu} 
location constraint \cite{2016-3}, physical network topology \cite{2016-1}, etc. Considering the location
preferences of the users, authors in \cite{2016-3}, formulate the
VNE problem as a graph bisection problem, which achieves integrated
VM and virtual link mapping. Restricting one PM to host
maximum one VM in order to reduce the failure impact onto single
VN arises another problem of accumulative failure impact onto all
corresponding VNs.
\subsection{Resource utilization optimization}
In order to reduce the total amount of required physical resources
in a multi-cloud environment, priority values are assigned to the
user's requests \cite{2017-2}. Here, each PM can be used to host a maximum of one VM from a VN. Such an approach contradicts the goal of minimizing the required physical resources. Assigning multiple VMs onto a single PM can
effectively reduce the required amount of resources. In a likewise
fashion, degree and clustering information of a VM within a VN are
used to embed the user's request onto physical network with the
objective to minimize the network utilization and maximize the
acceptance ratio as in \cite{2016-5}. However, the dependencies
among the VMs are not considered, resulting in inefficient mapping
solution.

In a multi-service cloud environment, resource allocation and
provisioning algorithms are presented in \cite{8027098}
considering the SLA, service cost, system capacity and resource
demand. However, considering only VM resources without network
resource may not fit into the real-life scenario. Authors in \cite{2016-7-Yin} propose a novel VNE algorithm
considering the link interference due to bandwidth
scarcity. The major advantage of the proposed I-VNE algorithm is
that the temporal and the spatial network topology information are
taken into consideration and the VNs are embedded with low
interference.

To efficiently utilize the computing resources and increase the
acceptance ratio, authors in \cite{2017-5-Liang} presented a
unique approach to embed the VN. This is done by simply placing the same VM onto multiple PMs and
distributing the input data onto corresponding PMs. However, this
approach required extra network resource and the number of PMs
required for a VN is more than the number of VMs present in the
VN. VNF placement problem is further studied in different scenarios, such as 5G network slicing framework \cite{8737660}, Network Function Virtualization (NFV) \cite{IWQos16-Xu}. Authors in \cite{xiao2019nfvdeep} address the Virtual Network Functions (VNFs) placement problem with Deep Reinforcement Learning approach to obtain a near-optimal solution. However, the proposed algorithm does not consider the impact of the failure of physical resources that may bring a more significant effect on the VNFs and the resource utilization.

In federated cloud, taking minimization of the network latency as
the major goal, authors in \cite{2016-4} propose network-aware VNE
scheme. The proposed algorithm ignores the fact that the single PM
can be used to execute multiple VMs, which eventually eliminate
the network resource demand without compromising the QoS and other
user's requirements. Similarly, the proposed congestion-aware VNE
scheme with Hose model abstraction in \cite{2017-congestion-VNE},
has the restriction to embed only one VM onto one PM. As a result,
the opportunity to minimize the amount of required network
resource is ignored. A rank-based VNE algorithm is presented in
\cite{8107491} considering substrate node degree, strength and
distance with others. However, its inability to scale and
inability to consider the failure probability of the substrate
nodes and links makes the VNE approach inefficient.
\subsection{Embedding time minimization}
Authors in \cite{2015-1} propose a
distributed embedding algorithm, which partitions the entire
physical network into multiple blocks. Though
the embedding cost for each VN is reduced, however, the
accumulative computational overhead still exists and the total
amount of required computational resources is very high. In order
to fasten the embedding process for the VN that follow mess
topology, authors in \cite{2016-8-Mano}, introduced a novel
approach that reduces the number of virtual links. However, combining
multiple virtual links onto one increases the embedding cost. This
trade-off is addressed by proposing reduction algorithms. However,
this centralized algorithm may not be able to scale up with the growing number of PMs and hence may degrade the efficiency in terms of embedding time.

The VNE problem can be viewed as multi-objective linear
programming problem, as in \cite{Wang2016196}. Though the authors
have achieved the goal to maximize the revenue and minimize the
embedding cost, however, the mapping solution may suffer from
physical resource fragmentation  issue and may not produce
efficient result in terms of embedding cost due to the mapping of
virtual node followed by the virtual links.
\subsection{Cost and Revenue}
Two VNE algorithms are presented based on Monte Carlo tree search and
multi-commodity flow algorithm in \cite{7859375} with the
objective to maximize the profit of the IP. The proposed centralized
VNE algorithm claims to map the virtual nodes and the virtual
links without splitting the physical path. However, the
centralized approach may produce an inefficient result in terms of
embedding time. In \cite{HESSELBACH201614}, a new path algebra
strategy is proposed to embed the
virtual nodes and the virtual links based on cost and revenue parameter.
However, the proposed multi-staged mapping strategy may produce
extra computational overhead, resulting higher embedding time.

Authors in \cite{8046038} address the problem of
inefficient bandwidth reservation due to the uncertainty. Though
the proposed stochastic approach achieves maximizing the revenue
of the CSP, however, author should also consider the failure
probability of the physical links.

A number of VNE approaches have been proposed considering dynamic
resource requirement, topology of the data center, energy
consumption, and revenue maximization. Though the studies in current
literature solve many real-life scenarios in cloud, to the best of
our knowledge none of the studies propose an embedding scheme,
which processes the VNs in semi-centralized manner to fasten the
embedding process and minimizes failure impact of the physical
resources onto the existing VNs.

\section{Problem Formulation} \label{sec:sysModel}

As discussed earlier, it is the job of the CSP to create/embed
multiple interconnected VMs onto a set of suitable PMs. In the
embedding process, the network topology plays an important role as
it defines the network structure and the connectivity of one PM
with other PMs. In our study, Fat-Tree network topology is taken
into consideration as the data center network topology. It is
assumed that the computing and network resources of PMs and the
networking devices remain static throughout the execution of the
algorithm for any particular VN.\rthree{In other words, the
resource capacities of the PMs and networking devices do not
change, when the VNE allocation algorithm runs.} The PMs are
heterogeneous in nature.
The resource requirement of the VN includes the computing and
network resource demand of VMs and virtual links, respectively,
which do not change over time. Further, the  topology of the VN
including the number of VMs, virtual links, the resource demand of the VMs and the virtual links are static and
hence do not change after the VN request is submitted to the CSP.
In further sections, we will discuss the network topology, the VN,
and failure impact of physical resources onto the VNs in details. A summary of all the notations is given in Table \ref{table:notations}.

\begin{table}
   \centering
   \caption{List of notations}
   \begin{tabular}{ |c|p{6.5cm}| }
       \hline
       \textbf{Notation} & \textbf{Description}\\
       \hline
       $P$ & Set of $m$ PMs $\{p_1, p_2, \dots, p_m\}$\\ \hline
       $S$ & Set of $s$ switches $\{S_1, S_2, \dots, S_s\}$\\ \hline
       $k$ & number of ports available at each switch\\ \hline
       $V$ & Set of $n$ VMs $\{v_1, v_2, \dots, v_n\}$ \\ \hline
       $x$ & resource type $\{memory, CPU\}$ \\ \hline
       $\alpha_i^x$ & Server resource requirements of VM $v_i$ \\ \hline
       $\alpha_{ij}^e$ & Bandwidth requirement between VM $v_i$ and $v_j$\\ \hline
       $e_{ij}$ & The virtual link between VM $v_i$ and $v_j$\\ \hline
       $\beta_i^x$ & Resource of type $x$ available at PM $p_i$\\ \hline
       $\beta_i^n$ & Network resource available at PM $p_i$\\ \hline
       $L_i^j$ & $1: $if there exist a link between PM $p_i$ and switch $S_j$\\ \hline
       $L_{ij}$ & $1: $if there exist a link between switch $S_i$ and switch $S_j$ \\ \hline
       $\omega_{ij}$ & Number of switches present between PM $p_i$ and PM $p_j$ \\ \hline
       $\psi_i$ & Workload on PM $p_i$\\ \hline
       $\hat{\alpha}_i^e$ & Bandwidth requirement by VM $v_i$\\ \hline
       $I_s^v(p_i)$ & The impact of failure of PM $p_i$ onto the set of VMs\\ \hline
       $\kappa_i^j$& $1: $if VM $v_j$ is assigned to PM $p_i$\\ \hline
       $\lambda_{ij}^{qr}$ & $1: $ if the virtual link $e_{qr}$ is assigned to physical link $L_ij$ or $L_i^j$\\ \hline
       $I_N^n(z_{ij})$ & Impact of physical link $z_{ij}$ failure on the set of virtual links\\ \hline
       $I_N^v(z_{ij})$ & Impact of physical link $z_{ij}$ failure on the set of VMs\\ \hline
       $I_w^v(s_j)$ & Impact of network switch, $s_j$, failure onto the set of VMs\\ \hline
       $I_w^n(s_j)$ & Impact of network switch, $s_j$, failure onto the set of virtual links\\ \hline
       $PE^e(p_i)$ & Set of edge layer neighboring PMs of the PM $p_i$ \\ \hline
       $PE^a(p_i)$ & Set of aggregation layer neighboring PMs of the PM $p_i$ \\ \hline
       $PE^c(p_i)$ & Set of core layer neighboring PMs of the PM $p_i$ \\ \hline
   \end{tabular}
\vspace{-2mm}
\label{table:notations}
\end{table}


\subsection{Fat-Tree data center network}

As shown in Fig. \ref{fig:example_fat-tree_vn}(a), the switches are mainly organized into three layers, such as core switch, aggregation switch, and edge switch in the Fat-Tree data center network \cite{2017-2}.
The number of layers in Fat-Tree
can be extended to more than three layers. However, we focus our
study considering the basic three-layer Fat-Tree network topology.
Let $k$ be the number of ports available at each switch. Being at
the root of the network, the number of core switches are
$(\frac{k}{2})^2$. The total number of aggregation switches is equal to
the total number of edge switches required, which is
$\frac{k}{2}$. The maximum number of servers that can be
attached to the Fat-Tree topology is $\frac{k^3}{4}$. Let
$S=\{S_1, S_2, S_3, ..., S_s\}$ be the set of $s$ number of
switches, which includes core switches, aggregation switches, and
edge switches. The value of $s$ can be calculated as
$s=\frac{5k^2}{4}$.

\begin{figure}[h]
    \begin{center}
    \epsfig{file=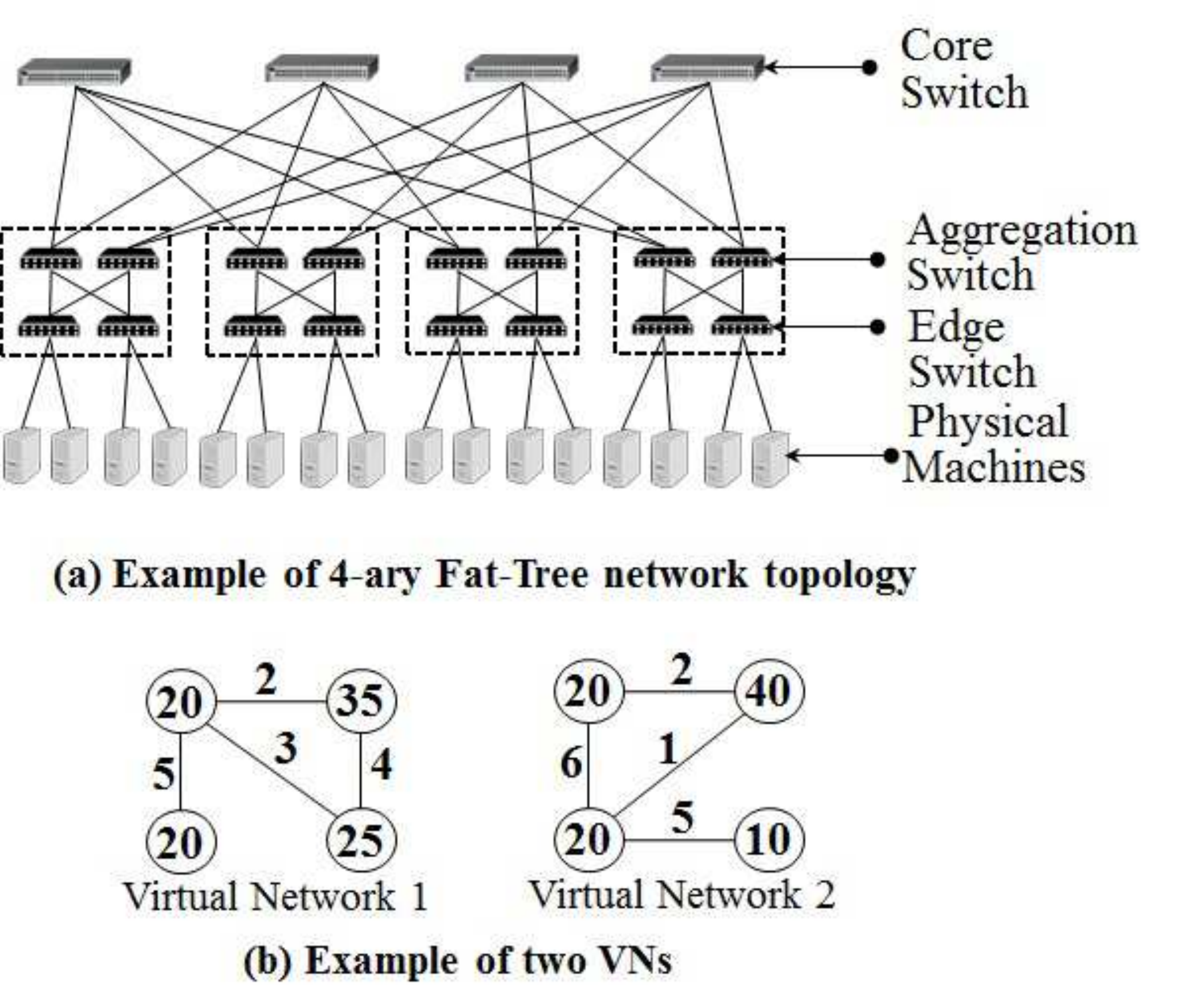,width=65mm}
    \vspace{-2mm}
    \caption{Example of Fat-Tree network topology and virtual network.}
    \vspace{-5mm}
    \label{fig:example_fat-tree_vn}
    \end{center}
\end{figure}

Let $P=\{p_1, p_2, p_3, ..., p_m\}$ be the set of $m=\frac{k^3}{4}$ number of PMs that are attached to the bottom level of Fat-Tree topology.
Let $\beta_i^x$ be the resource of type $x \in \{memory, CPU\}$
available at PM $p_i$. For instance, $\beta_2^{memory}$ represents
the amount of remaining memory available at PM $p_2$. $L_i^j$ is
the boolean variable, which represents if there exists a physical
link between switch $S_i, 1 \le i \le s$ and PM $p_j, 1 \le j \le
m$. Mathematically,
\begin{equation}
L_i^j = \left\lbrace
\begin{array}{cc}
1 & \text{if physical link between PM } p_i \\
  & \text{ and Switch }S_j\text{ exist}\\
0 & \text{Otherwise}
\end{array} \right.
\end{equation}
 Similarly, $L_{ij}$ is the boolean variable, which indicates if
 there exists a physical link between switch $S_i, 1 \le i \le s$ and $S_j, 1 \le j \le s, i \ne j$. Mathematically,
 \begin{equation}
 L_{ij} = \left\lbrace
 \begin{array}{cc}
 1 & \text{if physical link between switch } \\
   & S_i\text{ and }S_j\text{ exist}\\
 0 & \text{Otherwise}
 \end{array} \right.
 \end{equation}

In the three-layer Fat-Tree network topology, the number of links
from any PM to any one of the core switches is three. They are (1)
the link from the PM to the edge switch, (2) the link from the
edge switch to the corresponding aggregation switch, and (3) the
link from the aggregation switch to the core switch. The maximum
bandwidth capacity of all the links is same according to the
characteristics of the Fat-Tree network topology \cite{2017-1}.
The physical links at the lower level of the tree structure
network topology usually carry less workload as compared to the
links at the upper level. The $k$ ports in a switch are divided
into two groups of $\frac{k}{2}$ ports. One group is used to
connect to the upper layer switches and another group of ports is
used to connect to the lower layer switches. This infers the fact
that the failure of physical links at the lower level has less
impact than the failure of the upper level physical links.

\subsection{Virtual Network}\label{sec:sysModel:VN}
As depicted in Fig. \ref{fig:example_fat-tree_vn}(b), the VN is
represented as undirected weighted graph, where the node
represents the VM and the edge (or virtual link) represents the
communication between corresponding pair of VMs. Let, $V=\{v_1,
v_2, v_3, ... , v_n\}$ be the set of $n$ number of VMs. The
resource requirement of the VM $v_i, 1\le i \le n$ is represented
as $\alpha_i^x$, where $x \in \{memory, CPU\}$ defines the
resource type. Let, $e_{ij}$ and $\alpha_{ij}^e$ represent the
virtual link and the bandwidth requirement between VM $v_i$ and
$v_j, \quad i \ne j$, respectively. In the example given in Fig.
\ref{fig:example_fat-tree_vn}(b), the numeric value within each
circle represents the amount of required computing resource and
the value beside each edge represents the required network
bandwidth between the corresponding VMs.

\subsection{Definitions}

\begin{figure}[t]
    \begin{center}
        \epsfig{file=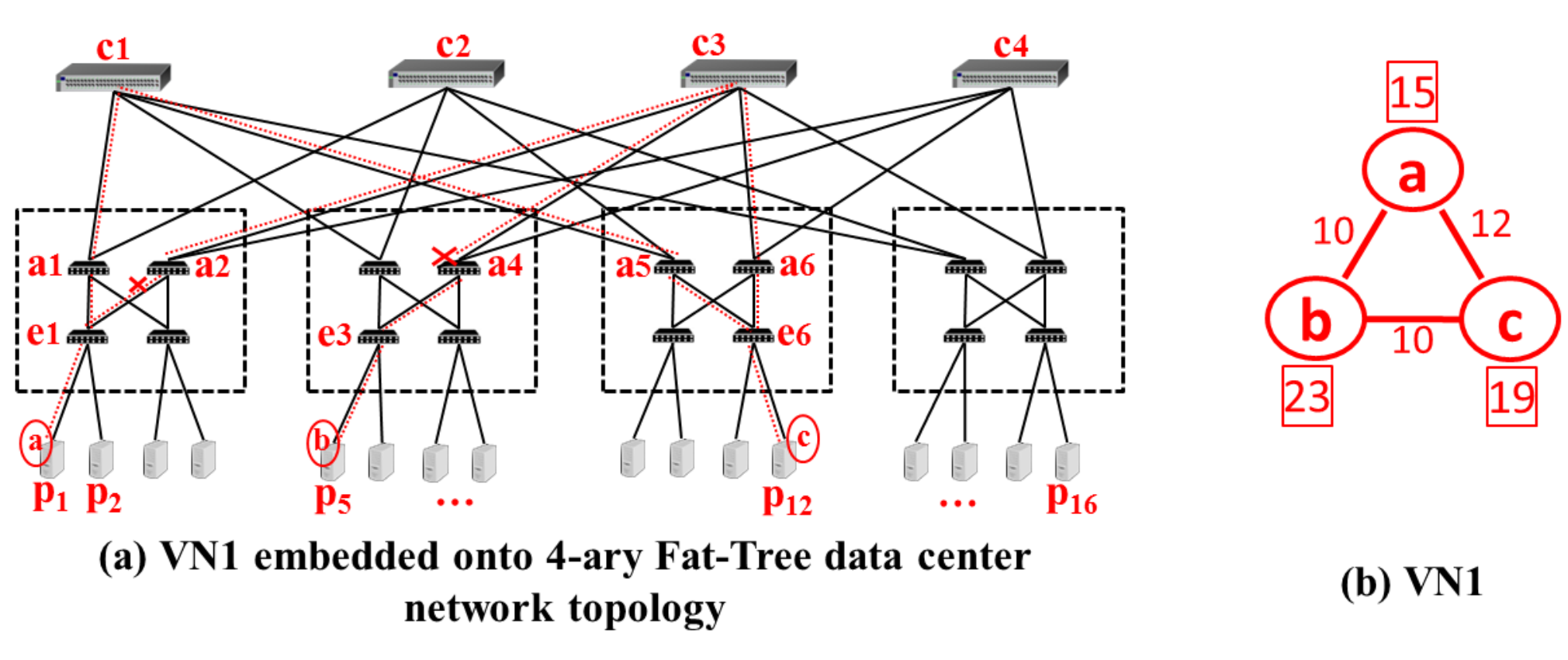,width=65mm}
        \vspace{-3mm}
        \caption{Resource failure impact onto virtual network.}
        \vspace{-6mm}
        \label{fig:resrc_failure_impact}
    \end{center}
\end{figure}

\begin{defn}{\textbf{Remaining Bandwidth at PMs:}}
The remaining available bandwidth at each PM is calculated by
finding the minimum remaining bandwidth available along each path,
which is originated from the corresponding PM to all core
switches. Mathematically,
    \begin{equation}\label{eq:def:pmReqBW}
    \beta_a^n = \min_{\forall S_i, S_j, S_l \in S} \left\lbrace
     \begin{array}{c}
     \gamma^r(p_a, S_i)*L_a^i, \\
     \gamma^r_a(S_i, S_j)*L_{ij},\\
     \gamma^r_a(S_j, S_l)*L_{jl}
     \end{array}\right\rbrace
    \end{equation}

Where, $\beta_a^n$ represents the remaining bandwidth available at
PM $p_a$. The first term $\gamma^r(p_a, S_i)*L_a^i$ represents the
remaining bandwidth available in the physical link between PM
$p_a$ and the connected edge switch. The middle term
$\gamma^r_a(S_i, S_j)*L_{ij}$ represents the remaining bandwidth
available at all physical links between the edge layer switches
and the aggregation layer switches that are connected to the same
pod, where PM $p_a$ is attached. Similarly, $\gamma^r_a(S_j,
S_l)*L_{jl}$ represents the remaining bandwidth available at all
physical links between the aggregation switches related to PM $p_a$ and
all core switches. For better understanding, let us consider the
example given in Fig. \ref{fig:resrc_failure_impact}(a). In order
to calculate the remaining bandwidth at PM $p_1$, we calculate the
remaining bandwidth available in all three layers of physical
links. Firstly, the link between PM $p_1$ and the edge switch
$e1$, secondly, the physical link between edge switch $e1$ and
aggregation switches $a1$ and $a2$, thirdly, the physical links
between the aggregation switches $a1$ and $a2$ and all core
switches, i.e. physical links $(a1,c1), (a1,c2), (a2,c3),$ and
$(a2,c4)$. The minimum bandwidth available among all
above-mentioned physical links is the remaining bandwidth
available at PM $p_1$.

\end{defn}

\begin{defn}{\textbf{Workload on PMs:}}\label{eq:def:pmWorkload}
    The workload on the PM $p_i$, represented as
    $\Psi_i$, can be calculated as the ratio between the amount of the
    physical resources allocated to other VNs and its maximum resource
    capacity.
\end{defn}

\begin{defn}{\textbf{Required Bandwidth of VMs:}}
    The bandwidth requirement by any virtual machine can be calculated by summing up the bandwidth requirement by the adjacent virtual links. Mathematically,
    \begin{equation} \label{eq:def:vmBWreq}
    \hat{\alpha}_i^e = \sum_{j=1}^{n}{\alpha_{ij}^e}
    \end{equation}

For instance, the total bandwidth required by the VM $a$ in Fig.
\ref{fig:resrc_failure_impact}(b) is calculated as the bandwidth
demand by the virtual link $(a,b)$ and $(a,c)$, i.e. $10+12=22$.
Similarly, the total required bandwidth by the VM $b$ and $c$ is
$20$ and $22$, respectively. In the proposed embedding scheme, it is assumed that the required bandwidth of all the virtual links are time-invariant and don't change over time.

\end{defn}

\begin{defn}{\textbf{Semi-Centralized VNE:}}
An embedding algorithm is said to be semi-centralized if the
algorithm follows both centralized and distributed approach for
embedding the VN onto the physical network. The first stage of the
proposed VNE algorithm follows centralized approach, where a set
of PMs is selected by a single embedder, whereas the second stage
of the proposed algorithm follows the distributed approach, where
the set of selected PMs executes the same algorithm to pick-out
suitable VM. Since the proposed embedding process follows both
centralized and distributed approach, it is called
semi-centralized VNE algorithm.
\end{defn}

\subsection{Resource failure impact} \label{sec:sysModel:resFailImpact}
We are mainly considering three types of resource failures, such
as (a) server or PM resource failure, which includes the CPU,
memory, storage, and other physical resources (b) physical link
failure, and (c) switch failure. Further, the resource failure
impacts are of two types, such as (a) the impact on the VMs, and
(b) the impact on the virtual links. In our study, the impact of
the failure of physical network resource and switch resource onto
both VMs and virtual links are taken into consideration. The impact of failure of PM resources onto the VMs is only taken into account. But, the impact of such failure onto the adjacent virtual links is not considered. However, the impact of server failure onto the virtual links
excluding the adjacent virtual links is not taken into account.
\rone{Here, it is assumed that the failure probability of the
servers can be obtained by the CSP by observing the historical
information of each server} \cite{2017-serviceRelibility}. The
failure probability of the physical machines may increase with the
increase in their workload,  users' interaction etc. The same
applies to the physical links also. The more communication and the
workload on links may lead to their higher failure probability.
Hence it is necessary to consider the failure probability of the
physical machines and the physical links, which can minimize the
failure impact onto the users' VNs.

\subsubsection{PM failure impact}
Let, $I_s^v(p_i)$ be the impact of failure of PM $p_i$ onto the
set of VMs $V = \{v_1, v_2,\dots,v_n\}$. The PM failure impact is
quantified as the value between 0 and 1, and is calculated as the
ratio of amount of allocated server resources affected and the
total amount of required resource. Mathematically, the impact of
physical server resource failure onto the virtual machine is

\begin{equation} \label{eq:PMFailImpactOnVMs}
I_s^v(p_i) = \frac{1}{|x|}*\sum_{\forall x}\frac{\sum_{j=1}^{n}\kappa_i^j * \alpha_j^{x}}{\sum_{j=1}^{n}\alpha_j^{x}}, \quad x \in \{memory, CPU\}
\end{equation}

The numerator in Eq.~(\ref{eq:PMFailImpactOnVMs}) represents the
amount of resource of type $x, x\in \{memory, CPU\}$ of the PM
$p_i$ is allocated to the VMs of the current VN. The denominator
represents the total amount of resource of type $x, x\in \{memory,
CPU\}$ is demanded by all VMs. $\kappa_i^j$ is the boolean
variable that indicates if the virtual machine $v_j$ is assigned
to the PM $p_i$.

\begin{equation} \label{eq:kappaDef}
    \kappa_i^j = \left\lbrace
    \begin{array}{cc}
     1 & \text{if VM }v_j\text{ is assigned to PM }p_i \\
     0 & \text{Otherwise}
    \end{array}
    \right.
\end{equation}
It is to be noted that equal importance is given to both types of
the resources. For better understanding, Let us assume an example
of a virtual network consisting of three VMs as \textit{a},
\textit{b}, and \textit{c} with memory and CPU resource demand of
$(10, 2), (23,3),$ and $(15,1)$ units, respectively. Let, VM
\textit{a, b, }and \textit{c} be deployed on PM \textit{A, B,} and
\textit{C}, respectively. The total memory resource demand of all
the three VMs is $10+23+15 = 48$. Similarly, $2+3+1=6$ is the
total CPU demand. Here, PM \textit{A} provides $10~units$ and
$2~units$ of memory and CPU resources, respectively to the VM
\textit{a}. The failure impact of PM \textit{A} onto the virtual
network can be calculated as
$\frac{1}{2}*\frac{10}{48}+\frac{2}{6} = 0.27$. Similarly, the
failure impact of PM \textit{B} and \textit{C} onto all the VMs
can be calculated as $0.49$ and $0.24$, respectively.

\subsubsection{Physical link failure impact}
In this subsection, we discuss the impact of physical link
failures on virtual links and the VMs.

\textit{Physical link failure impact on virtual links: }
$I_N^v$ and $I_N^n$ be the impact of physical link failure onto
the set of VMs and the set of virtual links, respectively. The
impact of the physical link failure onto the virtual links is
calculated as follows.

\begin{equation}\label{eq:NWfailImpactOnVirtualLink}
I_N^n(z_{ij}) = \frac{\sum_{q=1}^{n-1}\sum_{r=q}^{n}\lambda_{ij}^{qr} * e_{qr} * \alpha_{qr}^e}{\sum_{q=1}^{n-1}\sum_{r=q}^{n} e_{qr} * \alpha_{qr}^e }, \quad \forall i,j, i\ne j
\end{equation}

Where, the boolean variable $\lambda_{ij}^{qr}$ indicates if the
virtual link $e_{qr}$ is embedded onto the physical link $z_{ij}$,
which represents the physical link between two switches $L_{ij}$,
or the physical link between one switch and one PM $L_{i}^j$.
$\lambda_{ij}^{qr}$ can be expressed as

\begin{equation}\label{eq:lambdaDef}
    \lambda_{ij}^{qr} = \left\{
\begin{array}{cc}
1 & \text{if the virtual link } e_{qr} \\
  & \text{ is assigned to }  \text{physical link }  z_{ij} \\
0 & \text{Otherwise}
\end{array}\right.
\end{equation}

The numerator and denominator of the
Eq.~(\ref{eq:NWfailImpactOnVirtualLink}) represent the amount
of physical network resource allocated to the VN and the total
amount of network resource demanded by all virtual links,
respectively.

For the sake of clarity, an example is presented in Fig.
\ref{fig:resrc_failure_impact}. Assume VN1 contains three VMs $a,
b,$ and $c$ with the computing resource demand of $15$ units, $23$
units and $19$ units, respectively. The network bandwidth demands
of the virtual links $(a, b)$, $(b, c)$, and $(a, c)$ are $10$
units, $10$ units, and $12$ units, respectively. Let, the entire
physical network consists of 16 PMs. VM $a, b,$ and $c$ are
assigned to PM $p_1, p_5,$ and $p_{12}$, respectively, as shown in
Fig. \ref{fig:resrc_failure_impact}. The dotted line in Fig.
\ref{fig:resrc_failure_impact}(a) represents the assigned virtual
links. For example, the virtual link $(a, c)$ is assigned to the
physical path $(p_1, e1, a1, c1, a5, e6, p_{12})$. The virtual
link $(b, c)$ is assigned to physical path $(p_5, e3, a4, c3, a6,
e6, p_{12})$. Similarly, the virtual link $(a,b)$ is assigned to
the physical path $(p_1, e1, a2, c3, a4, e3, p_5)$. Under this
virtual network embedding, the impact of failure of the physical
link $(e1, a2)$ results in failure of communication between VM $a$
and $b$. Using Eq.~(\ref{eq:NWfailImpactOnVirtualLink}), the
impact can be calculated as $\frac{10}{10+10+12} = 0.31$.

\textit{Physical link failure on a single VM:} The failures of
the physical links have also impacted on the virtual machines as the
failures of the virtual links lead to no or delay communication
among VMs. \rone{The communication among the VMs within a VN is
required to finish the assigned jobs and therefore the failure of
a single virtual link has the impact on the corresponding VMs and entire VNs} \cite{2017-serviceRelibility, LVRM-dehury}.
Following equation is used to calculate the impact of physical
link failure onto the single VM.


\begin{equation} \label{eq:NWfailImpactOnVM1}
\begin{split}
I_N^v(z_{ij},v_q) & =\frac{\alpha_{qr}^{e}*\left(\frac{\alpha_q^x}{\sum_{r=1}^{n}\alpha_{qr}^e}\right)}{\alpha_q^x}\\
& =\frac{\alpha_{qr}^{e}*\alpha_q^x}{\alpha_q^x * \sum_{r=1}^{n}\alpha_{qr}^e} 
=\frac{\alpha_{qr}^e}{\sum_{r=1}^{n}\alpha_{qr}^e}, \quad r \ne q
\end{split}
\end{equation}

The notation $I_N^v(z_{ij}, v_q)$ represents the impact of failure
of physical link $z_{ij}$ onto the VM $v_q$. The notation $z_{ij}$
is used to represent the physical link $L_{ij}$ or $L_{i}^j$. In
the example given in Fig. \ref{fig:resrc_failure_impact}, the
impact of failure of physical link $(e1, a2)$ onto the VM $a$ can
be calculated as $\frac{10}{10+12}=0.45$.

\textit{Physical link failure on all VMs:} Extending further, the
impact of failure of physical link $z_{ij}$ onto all VMs can be
calculated as

\begin{equation}\label{eq:NWfailImpactOnVN}
I_N^v(z_{ij}) = \sum_{q=1}^{n-1}\sum_{r=q}^{n}{\lambda_{ij}^{qr}}\left[\frac{\alpha_{qr}^e\left( \frac{\alpha_q^x}{\sum_{u=1}^{n}{\alpha_{qu}^e}} + \frac{\alpha_r^x}{\sum_{u=1}^{n}{\alpha_{ur}^e}} \right)}{\sum_{u=1}^{n}{\alpha_u^x}} \right],
\end{equation}
where, the numerator represents the sum of amount of computing
resources of two corresponding VMs of each affected virtual link
and the denominator represents the total amount of the resource
demanded by all VMs. Applying Eq. (\ref{eq:NWfailImpactOnVN}) onto
the example given in Fig. \ref{fig:resrc_failure_impact}, the
impact of failure of physical link $(e1, a2)$ onto all VMs can be
calculated as $\frac{10}{10+12} + \frac{10}{10+10} = 0.95$.

\subsubsection{Switch failure impact}
$I_w^v$ and $I_w^n$ be the impact of network switch failure onto the VM and the virtual link, respectively.

The failure of the switches can be considered as the collective
failure of the attached physical links.
The impact of such failure onto
the allocated computing resources or the VMs can be calculated by
using the Eq.~(\ref{eq:NWfailImpactOnVN}) as follows

\begin{equation}\label{eq:SWfailImpactOnVMs}
I_w^v (s_j) = \sum_{i=1}^{s}{z_{ij}*I_N^v(z_{ij})}, \quad i \ne j
\end{equation}

Similarly, the impact of switch failure onto the allocated network
resources or the virtual links is calculated as follows.

\begin{equation}\label{eq:SWfailImpactOnNW}
I_w^n (s_j) = \sum_{i=1}^{s}{z_{ij}*I_N^n(z_{ij})} , \quad i \ne j
\end{equation}

Considering the example given in Fig. \ref{fig:resrc_failure_impact}, failure of the aggregation switch
$a4$ results in failure of virtual links $(a,b)$ and $(b,c)$.
However, failure of the core switch $c1$ results in failure of the
virtual link $(a,c)$.

\subsection{Neighbor PMs}
Neighbor PMs are used to find the set PMs that are nearer to each
other and can be used to embed a VN. As a result, the VMs those
are belonged to the single VN are placed as nearer as possible.
Neighbor PMs are of three types, such as (a) Edge layer neighbor
PMs (b) Aggregation layer neighbor PMs, and (c) Core layer
neighbor PMs. These are classified based on the number of
intermediate switches.

\subsubsection{Edge layer neighbor PM} Two PMs are said to be edge layer neighbor PMs, if both PMs are connected to the same edge switch. Mathematically,

\begin{equation} \label{eq:edgeLayerNeighborPM}
    {PE}^e(p_i) = \{p_j|w_i^j=1\}, 1 \le i \le m, 1 \le j \le m, i \ne j
\end{equation}
where, ${PE}^e(p_i)$ is the set of edge layer of neighboring PMs
of the physical machine $p_i$. $w_i^j$ is the minimum number of
intermediate switches present in between the PM $p_i$ and $p_j$.

\subsubsection{Aggregation layer neighbor PM} Two PMs are said to be aggregation
layer neighbor PMs, if both PMs are attached to the same pod. In
other words, the number of intermediate switches between both PMs
is $3$. Mathematically,

\begin{equation}\label{eq:aggrLayerNeighborPM}
PE^a(p_i) = \{p_j|w_i^j=3\}, 1 \le i \le m, 1 \le j \le m, i \ne j
\end{equation}

where, $PE^a(p_i)$ is the set of aggregation layer neighboring PMs
of physical machine $p_i$.

\subsubsection{Core layer neighbor PM}
Similar to edge layer and aggregation layer neighbors, the number of intermediate
switches among core layer neighbor PMs is $5$. This can be
represented mathematically as follows.

\begin{equation}\label{eq:coreLayerNeighborPM}
PE^c(p_i) = \{p_j|w_i^j=5\}, 1 \le i \le m, 1 \le j \le m, i \ne j
\end{equation}

\subsection{Objective function}
In the proposed failure-aware VNE scheme, the failure
probabilities of the physical machines are taken into
consideration in order to reduce the failure impact on the users'
VNs. \rthree{The failure impact of the physical machines, physical
links and switches are considered as calculated in Section
}\ref{sec:sysModel:resFailImpact}.\rthree{The failure impact onto
the VMs and virtual links are calculated in a balanced model
presuming that all the VMs and virtual links are not
associated with any additional parameter to decide their importance.
However, the VM or virtual link having higher resource demand has
direct influence over the failure impact. In other word, VM having
higher resource has maximum failure impact and similarly, the
virtual link having higher network resource demand has maximum
failure impact. }The main goal of the proposed work is to reduce
the failure impact onto the users' VN. This is done by embedding
the VMs onto the PMs in close proximity as discussed above. As a
result, along with the overall reduced failure impact, the
proposed VNE scheme would be able to minimize the total amount of
physical resources. Based on the above problem description, the
objective function of the proposed scheme is formulated as
follows:

\noindent \textbf{\rtwo{Objective:}}

\begin{align} \label{eq:objective}
\min  \quad \sum_{i=1}^m \left[ \varepsilon_1 * I_s^v(p_i)
+ \varepsilon_2 * \sum_{\forall r} \frac{I_N^v(Z_{ir})+I_N^n(Z_ir)}{2} \right.\nonumber \\
\left. + \varepsilon_3 * \sum_{\forall q} \frac{I_w^v(S_q)+I_w^n(S_q)}{2} \right]
\end{align}

\noindent \textbf{Constraints:}

\begin{equation} \label{const:compResrc}
\alpha_{j}^x \le \beta_i^x, \quad \forall v_j \in V, \forall p_i \in P, x \in \{CPU, Memory\}
\end{equation}

\begin{equation}\label{const:NWResrc} 
    \hat{\alpha}_j \le \beta_i^n, \quad \forall v_j \in V, \forall p_i \in P
\end{equation}

\begin{equation}\label{const:oneVM_onePM} 
    \sum_{i=1}^{m} {\kappa_i^j = 1}, \quad \forall v_j \in V
\end{equation}

\begin{equation} \label{const:weightFactor}
\varepsilon_1 + \varepsilon_2 + \varepsilon_3 = 1, \quad 0 < \varepsilon_1, \varepsilon_2, \varepsilon_3 < 1
\end{equation}


The objective function in Eq. \ref{eq:objective} is of
two-fold. The first part focuses on selecting the PM such that its
failure would bring minimum overall impact onto the assigned VN.
The second part focuses on selecting the adjacent physical links
and the switches such that their failure would bring minimum
impact onto the virtual machines and the virtual links. \rtwo{The
weight of the impact of the resources can be decided by modifying
the values of $\varepsilon_1, \varepsilon_1, $ and
$\varepsilon_3$. This would allow the CSP to give importance to
different resources at different time instances. For example,
higher value of $\varepsilon_2$ gives maximum importance to the
impact of the physical link failure over the failure of the
physical machines and switches. Similarly, equal of all
the weight factors give equal importance to the impact of failure
of the PMs, physical links, and switches.}

While evaluating the objective function, the proposed algorithm
must ensure that following constraints are satisfied. Constraint
(\ref{const:compResrc}) ensures that the available computing
resource of the selected physical machine is more than the
computing resource demand of the virtual machine. Similarly,
Constraint (\ref{const:NWResrc}) ensures that the total required
network bandwidth of a VM is less than the available bandwidth of
the PM. Constraint (\ref{const:oneVM_onePM}) ensures that no VM is
embedded onto multiple PMs. In the proposed VNE scheme, we take
the advantage of embedding multiple VMs of single VN onto a single
PM. This will allow us to ignore the corresponding virtual link
from embedding onto the physical network. \rtwo{Constraint
}(\ref{const:weightFactor}) restricts the values of
$\varepsilon_1, \varepsilon_1, $ and $\varepsilon_3$ to be less
than $1$ and is greater than $0$. However, the sum of all the
weight factors must be equal to $1$.

\section{Proposed VNE algorithm}\label{sec:sol}
\rtwo{In the above section, we have formulated the VNE problem and
have presented the objective function with the aim to minimize the
impact of the resource failure onto the virtual networks. In order
to meet the objectives mentioned in Eq.}
\ref{eq:objective}, we propose a novel heuristic two-staged
Failure aware Semi-Centralized Virtual Network Embedding (FSC-VNE)
algorithm, which is specially designed for a large number of
physical servers connected using the Fat-Tree network topology in a
data center. With one VNE request as the input, the entire algorithm works in two stages. First is
the PM Cluster Formation (PM-CF) stage, and second is the VM
Pick-out (VM-Po) stage. The proposed FSC-VNE algorithm follows the
semi-centralized approach for embedding the VN. As discussed in
Section~\ref{sec:intro}, the VNs are embedded by the single
dedicated server onto thousands of physical servers in the
first-come-first-serve manner in a centralized approach, which
introduces a waiting time for the VNs those arrive lately.
Further, in distributed approach of the embedding process, the VNs
are embedded by multiple dedicated servers concurrently, which
leaves the synchronization and communication overhead for the
physical servers as a major issue. Taking the advantages of both
approaches, the PM-CF stage is designed to preprocess the VNs in a
centralized manner, whereas the VM-Po stage is designed to work in
distributed approach with negligible communication overhead, as
described in Lemma \ref{lemma:CommOverhead}. It is to be noted
that, reducing the embedding time by distributing the incoming
virtual networks onto multiple servers is one of our goals.
However, being one of the major disadvantages of centralized
approach, single-point-failure  problem of the proposed FSC-VNE
scheme still exist, which is an obvious pitfall of both
centralized and semi-centralized algorithms.  As a whole, the
proposed FSC-VNE algorithm is said to be semi-centralized
embedding algorithm.

\subsection{PM Cluster Formation (PM-CF) stage}
As the name suggests, in this stage a set of PMs is chosen, known
as PM cluster. The global view of the entire data center network
must be obtained before processing the incoming VNs, essentially
in real-time manner. The global information includes the available
server resources such as remaining number of CPUs, memory and
storage availability. This also includes the bandwidth information
in each link. On the other hand, the information regarding the VN
includes the computing resource requirement such as memory, CPU
requirement. For exchange of intermediate data among VMs, the
required bandwidth must be made available. The required bandwidth
for each VM is calculated as the sum of bandwidth requirement of
each virtual link that is attached to the corresponding VM as
defined in Eq. (\ref{eq:def:vmBWreq}). As discussed earlier in Eq.
(\ref{eq:def:pmReqBW}), the remaining available bandwidth at each
PM is calculated by finding the minimum remaining bandwidth
available at each path that is originated from the corresponding
PM to all core switches.

Algorithm \ref{algo:algo1_clusterFormation} receives one VNE request at a time and the maximum bandwidth requirement among all VMs is calculated in Line \ref{algo:cluster:getMaxMem} - \ref{algo:cluster:getMaxNwReq} considering the resource requirement of each VM. The
remaining available bandwidth at each PM is then compared with the
maximum bandwidth requirement in order to remove PMs with
inadequate network resource as given in Line
\ref{algo:cluster:form_nw}. In this process, some PMs may fail to
fulfill the maximum bandwidth requirement among all the VMs.
However, such PMs may be attached to some physical links with
enough bandwidth resources to fulfill the demand of some VMs that
require minimum bandwidth among all VMs. We calculate the maximum
bandwidth requirement among all VMs as our goal is to select a set
of PMs that can fulfill the network resource demand of any VM.

From the resultant list of the PMs, the available server resource
and the resource requirement of the VMs are compared. The maximum
memory requirement and the maximum number of CPU requirements are
calculated. Based on this calculation, multiple PMs are removed
further from the list. Now the resultant list of PMs is eligible
to host any VMs from the current VN. Preparing the list of such PMs would reduce the embedding time and also ensures that any VM can be embedded onto any PM. The list is further sorted in
an ascending order based on any available resource type. Since,
all PMs are eligible to host all the VMs, $c$ number of PMs are
selected to form the PM cluster. The value of $c$, the PM cluster
size, is calculated based on the number of VMs that are present in
the current VN.

The details of the PM-CF stage are mentioned in Algorithm
\ref{algo:algo1_clusterFormation}. The PM-CF algorithm starts by
initializing $P$ and $V$ as set of all PMs and the inter-connected VMs
that belong to the current VN, respectively, as in Line
\ref{algo:cluster:init_P} - \ref{algo:cluster:init_V}. Following
this initialization, in Line
\ref{algo:cluster:getMaxMem}-\ref{algo:cluster:getMaxNwReq},
maximum resource requirement among all VMs is calculated. In the
proposed algorithm, resource requirement includes memory, CPU, and
network bandwidth requirement. The set $\ddot{P}_1$ is used to
keep a list of PMs that are eligible to provide enough network
bandwidth to all the VMs as in Line \ref{algo:cluster:form_nw}.
Similarly, the set $ \ddot{P}_2 \text{ and }\ddot{P}_3$ are used
to keep a list of PMs that are eligible to provide required amount
of computing resources as in Line \ref{algo:cluster:form_M} and
\ref{algo:cluster:form_C}. In Line \ref{algo:cluster:sort_nw_M_C},
the sorting operation is applied to the intersection of set
$\ddot{P}_1, \ddot{P}_2$ and $\ddot{P}_3$. From the resultant
sorted list of PMs $R$, first $n*c$ numbers of PMs are chosen, as
in Line \ref{algo:cluster:choose_final_cluster}. Here, $n$
represents the number of VMs present in the current VN, which is
calculated in Line \ref{algo:cluster:calcN}. We assume that the
value of $c$ is  decided by CSP satisfying the following
conditions.

\begin{equation}
1 \le c \le \frac{|\ddot{P}_1 \cap \ddot{P}_2 \cap \ddot{P}_3|}{n}
\end{equation}

Larger value of $c$ increases the PM cluster size. As a result,
larger number of PMs will be involved concurrently in VM pick-out
stage.

\begin{algorithm}\label{algo:algo1_clusterFormation}
    \linespread{1}\selectfont
    \SetAlgoLined
    \textbf{Initialize: } P = \{Set of all PMs\} \label{algo:cluster:init_P}\;
    \qquad \qquad  V = \{Set of all VMs\}  \label{algo:cluster:init_V}\;

    $n = |V|$\; \label{algo:cluster:calcN}
    $VN_{max}^{mem} = \max (\alpha_i^{mem}), \forall i, 1\le i \le n$\; \label{algo:cluster:getMaxMem}
    $VN_{max}^{CPU} =  \max (\alpha_i^{CPU}), \forall i, 1\le i \le n$\; \label{algo:cluster:getMaxCPU} 
    $\alpha_i^{nw} = \sum_{j=1}^{n} \alpha_{ij}^e, i\ne j$ \; \label{algo:cluster:getBWreq} 
    $VN_{max}^{nw} =  \max (\alpha_i^{nw}), \forall i, 1\le i \le n$\;\label{algo:cluster:getMaxNwReq} 

    $\ddot{P}_1 = \{p_i | \beta_i^n \ge VN_{max}^{nw}, \forall p_i \in P\}$ \label{algo:cluster:form_nw}\;
    $\ddot{P}_2 = \{p_i | \beta_i^{mem} \ge VN_{max}^{mem}, \forall p_i \in P\}$  \label{algo:cluster:form_M}\;
    $\ddot{P}_3 = \{p_i | \beta_i^{CPU} \ge VN_{max}^{CPU}, \forall p_i \in P\}$    \label{algo:cluster:form_C}\;

    $R =$ Sort $\{\ddot{P}_1 \cap \ddot{P}_2 \cap \ddot{P}_3 \}$ in ascending order \label{algo:cluster:sort_nw_M_C} \;
    $CL = $\{Set of first $n*c$ number of PMs from R\} \label{algo:cluster:choose_final_cluster} \;
    \caption{PM Cluster Formation (PM-CF) algorithm}

\end{algorithm}

\subsection{VM Pick-out (VM-Po) stage}

VM pick-out algorithm works in a distributed fashion. The central
server forwards separate copies of the details of the resultant PM
cluster and the VN to each PM. Each PM executes the VM-Po
algorithm with own copy of inputs. The details of resultant
PM cluster include the load of each PM, the number of networking
devices present between each pair of PMs, available amount of
computing resources (memory and CPU) on each PM, failure
probability of each PM etc. In order to reduce the network failure
impact onto current VN, multiple VMs can be placed onto a single PM.
As the physical network link at the upper layer of the Fat-Tree
topology carries out most of the data transmission, failure of
such physical links brings maximum impact onto the entire network.
Considering this fact, our objectives are to minimize the required
number of switches, physical links and PMs for embedding the VN.
When the PMs are allowed to host multiple VMs of single VN, it is obvious that the total number of required PMs will be reduced. We have provided the proof to minimize the required number of physical links and physical switches in Lemma \ref{lemma:minSw_minLink}-\ref{lemma:minLinksReq}.

It is assumed that the failure probabilities of the PMs are different.
As mentioned in the second part of the objective function in
Eq. \ref{eq:objective}, minimizing the required number
of switches infers the minimization of the required number of
physical links, which is proved in Lemma
\ref{lemma:minSw_minLink}. As discussed earlier, the basic
difference between three types of neighbor PMs is the number of
switches involved in the communication path. The number of
switches involved in the path between edge layer neighbors is 1 as
in Eq. (\ref{eq:edgeLayerNeighborPM}), whereas the number of
switches involved in the path between aggregation layer neighbors
is $3$ and core layer neighbors is $5$ as in Eq.
(\ref{eq:aggrLayerNeighborPM}) and Eq.
(\ref{eq:coreLayerNeighborPM}), respectively.

\begin{algorithm}[ht]\label{algo:VM-Po}
    \linespread{1}\selectfont
    \SetAlgoLined
    \KwIn{  Resource requirement of VN,  \\
        \quad \quad Resource availability of all PMs, \\
            \quad \quad  Workload of all PMs,  \\
            \quad \quad  Failure probability of all PMs
    }
\textbf{Initialization}\;
$V = \{$Set of VMs belong to current
VN$\}$\;
    $OD$: unique id of the PM, where the algorithm is running \label{algo:VM-Po:getOwnID}\;
    $p^{od} = $ represent the current PM, where the algorithm is running \label{algo:VM-Po:ownMachine}\;
    Form the set ${PE}^e(p^{od})$ according to Eq. (\ref{eq:edgeLayerNeighborPM}).\; \label{algo:VM-Po:edgeLayerNeighborPM}
    $PE = \{p^{od} \cup {PE}^e(p^{od})\}$\;\label{algo:VM-Po:formPE}
    \tcc{Find the leader PM based on min failure probability.}  
    $pl = min_{\forall p_i \in PE} fp(p_i)$\; \label{algo:VM-Po:findLeader}
    \eIf{$pl=p^{od}$} { \label{algo:VM-Po:checkLeader}
        Goto Line \ref{algo:VM-Po:calcFP}
    }{
        Wait for further instruction from PM $pl$\;\label{algo:VM-Po:waitForLeader}
    }
    \tcc{Calculate total failure probability of leader PM}
    $FP(pl) = \sum_{\forall p_i \in PE}{fp(p_i)}$\; \label{algo:VM-Po:calcFP}
    Compare with other leader PMs \;
    $SP = $ leader PM with minimum $FP$ value as calculated in Step \ref{algo:VM-Po:calcFP}\; \label{algo:VM-Po:compareLeaders}
    \eIf{$SP = p^{od}$}{\label{algo:VM-Po:checkSP}
        Goto Line \ref{algo:VM-Po:callPickOutFunc}\;
    }{
        Wait for further instruction from PM $SP$\;
    }
    $status=$Pick-out($V$)\; \label{algo:VM-Po:callPickOutFunc}
    \If{$status \ne $allEmbedded}{
        \ForEach{$pl \in {PE}^a(p^{od})$}{
            Remove VMs that are already allocated\;
            Send instruction to PM $pl$ to execute Pick-out($V$) function\;
            \If{$V = $empty}{
                Terminate\;
            }
        }
        \ForEach{$pl \in {PE}^c(p^{od})$}{
            Remove VMs that are already allocated\;
            Send instruction to PM $pl$ to execute Pick-out($V$) function\;
            \If{$V = $empty}{
                Terminate\;
            }
        }
    }
    \caption{VM Pick-out (VM-Po) Algorithm}
\end{algorithm}

The Algorithm \ref{algo:VM-Po} starts by obtaining the unique
machine ID, $OD$, of the PM where the algorithm is running, as in Line~\ref{algo:VM-Po:getOwnID}. As the
algorithm in each PM is executed independently, the value of $OD$
in each PM is different. $p^{od}$ represents the PM, where the
algorithm is running as given in Line~\ref{algo:VM-Po:ownMachine}.
Each PM forms a set ${PE}^e$ of edge layer neighbor PMs as in
Line~\ref{algo:VM-Po:edgeLayerNeighborPM}. In each PM, the
elements of the set ${PE}^e$ are different. Since, the set of edge
layer neighbor PMs does not include the PM itself, another set
$PE$ is formed with help of union operation of the set ${PE}^e$
and the PM itself as in Line \ref{algo:VM-Po:formPE}. The reason
behind forming the set of $PE$ is to allow the VMs to be embedded
onto the nearby PMs. Further, one leader PM is chosen based on the
failure probability of all the PMs that belong to the set $PE$. In
other word, the failure of all PMs that are connected to the same
edge switch are compared to find the leader PM. In
Line~\ref{algo:VM-Po:findLeader}, the PM with minimum failure
probability value $fp$ is chosen for the leader. As all the PMs
have the information regarding the failure probability of each
PMs, no additional communication is needed to choose the leader.
Further, it is obvious that the resultant leader in each PM is
same. The leader PM calculates the accumulative failure
probability $FP$ by summing the individual neighbor PM's failure
probability $fp$ as mentioned in Line \ref{algo:VM-Po:calcFP}.
This allows us to consider the failure probability of other PMs
with the failure probability of leader PM during the embedding
process. On the other hand, the PMs that do not satisfy to be the
leader have to wait for further instruction from their
corresponding leader as in Line \ref{algo:VM-Po:waitForLeader}.

It is possible that there may be multiple leader PMs. Such
situation may occur, if the PM cluster contains multiple edge
switches. For each edge switch, there will be single leader PM.
Further, each leader PM obtains the set of other leaders, $pl$,
that are connected to different edge switches. After obtaining the
set of leaders and their corresponding accumulated failure
probability value $FP$, the leader PM $SP$ with minimum
accumulative failure probability is chosen as mentioned in Line
\ref{algo:VM-Po:compareLeaders}. The leader PM $SP$ invokes the
function $Pick-out$ as in Algorithm \ref{algo:VM-Po} with the
argument as the set VMs or the current VN as mentioned in Line -
\ref{algo:VM-Po:callPickOutFunc}. The responsibility of the
$Pick-out$ function enables the corresponding PM to pick-out
suitable VM(s) without any conflict with other PMs. If the leader
PM $SP$ and the PMs that belong to the corresponding set $PE$
pick-out all the VMs present in the current VN, the embedding
process terminates. However, if any VM(s) is not picked-out by any
PM, other leader PMs that are connected to the same pod or the PMs
that are aggregation layer neighbors of the leader PM $SP$ will
invoke the function $Pick-out$. The same procedure is repeated for
the leader PMs that are core layer neighbors to the leader PM
$SP$.

\begin{algorithm}\label{algo:po}
    \linespread{1}\selectfont
    \SetAlgoLined
    \If{$p^{od}=pl$}{
        Send instruction to all PM that belong to set PE to execute Pick-out($V$) function\;
    }
    $p_c = min_{\forall p_i \in PE} {\Psi_i}$\; \label{algo:po:findPM}\tcc{Choose PM with minimum load from the set PE}
    Choose VM pair $(v_1,v_2)$ with maximum bandwidth demand \;\label{algo:po:findVMPair}
    $y_a = $ VM with maximum resource requirement between VM $v_1$ and $v_2$\; \label{algo:po:chooseFirstVM}
    \eIf{$\beta_c^x \ge \alpha_a^x$}{
        Assign $y_a$ to PM $p_c$\; \label{algo:po:assignFirstVM}
    }{
        Goto Line \ref{algo:po:findPM}\;
    }
    $y_b = $ VM with minimum resource requirement between VM $v_1$ and $v_2$\; \label{algo:po:chooseSecondVM}
    \eIf{$\beta_c^x \ge \alpha_b^x$}{
        Assign $y_b$ to PM $p_c$\;\label{algo:po:assignSecondVM}
    }{
        Goto Line \ref{algo:po:findPM}\;
    }
    \eIf{$V$ set is Empty}{
        Terminate the $Pick-out$ algorithm.
    }{
        Repeat from Line \ref{algo:po:findVMPair}\;
    }

    \caption{$Pick-out$ algorithm}
\end{algorithm}

The details of $Pick-out$ function are mentioned in Algorithm
\ref{algo:po}. Similar to VM-Po algorithm, each PM executes the
$Pick-out$ algorithm in a parallel manner. The algorithm starts by
receiving the VN as the only argument. Instead of all PMs that are
selected by the PM-CF algorithm, a set of edge layer neighbor PMs
executes concurrently at any given point of time. The algorithm is
named as $Pick-out$ as each PM chooses one or more than one
suitable VM independently from a set of VMs without any
communication with other PMs. The beauty of this algorithm is that
no VM will be chosen by multiple PMs without any communication
among them. The PM with minimum workload as given in Line
\ref{algo:po:findPM} picks out the VMs pair that requires maximum
network bandwidth as in Line \ref{algo:po:findVMPair}. From the
chosen VM pair, the VM with maximum resource requirement as in
Line \ref{algo:po:chooseFirstVM} is chosen by the PM as mentioned
in Line \ref{algo:po:assignFirstVM}. Similarly, the other VMs are
picked-out by the PM based on its remaining resources as mentioned
in Line
\ref{algo:po:chooseSecondVM}-\ref{algo:po:assignSecondVM}. The
process of picking VMs is repeated by the PM from Line -
\ref{algo:po:findVMPair} till the available resource is enough to
host a VM. All  PMs follow the same procedure in parallel manner.

In the proposed VNE algorithm and objective function
derived in Eq. \ref{eq:objective}, we exploit the
opportunity to embed multiple VMs onto a single PM. Assigning the
VMs onto an equal number of PMs require higher number of physical
links and thus higher number of switches taking the failure
probability of the PMs into account. In this scenario, failure of
a single PM brings maximum impact onto the single user as multiple
VMs of a VN are placed on the single PM. On the contrary, it is
observed that allocating multiple VMs onto a single PM could
entirely remove the requirement of the network bandwidth and
therefore could eliminate the possibilities of any impact onto the
corresponding virtual links due to the failure of the physical
links or switches. Failure of a PM would definitely have the
impact on either one or more users, which cannot be ignored.
However, the failure impact of the physical links and switches can
be ignored as multiple interconnected VMs can be allocated to one
PM.

The proposed algorithm is designed by considering the
characteristics of the Fat-Tree network topology, where multiple
layers of the core, aggregation, and edge layer switches are used
to connect a large number of physical servers. Based on this, the
numbers of neighboring PMs are calculated, which play an important
role in the embedding process. The approach of the proposed
algorithm can be applied to the data center with other network
topology. However, the proposed algorithm needs to be modified in
the environment with different network topology. For example, in
case of the server-centric network topology, the calculation of
the physical link and switch failure impact along with the number
of neighboring PMs need to be revisited.

As discussed before, the current version of FSC-VNE algorithm cannot be applied to other data center network topologies. Given that the FSC-VNE is modified and implemented in other data center network topologies, the algorithm may not suffer from the scalability issue. In the second stage, while picking out the VMs in a distributed manner, a large number of PMs can be employed to perform the pick-out task. However, a few numbers of PMs will be able to host the VMs based on their failure probability. Furthermore, scaling the proposed algorithm to a larger number of PMs, the embedding time may increase due to the communication overhead during the execution of the algorithm irrespective of the underlined network topology \ref{algo:VM-Po} and \ref{algo:po}.

\begin{lemma} \label{lemma:minSw_minLink}
Minimizing the number of switches eventually minimizes the
number of physical links required to embed a VN.
\end{lemma}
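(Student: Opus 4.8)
The plan is to exploit the rigid layered structure of the Fat-Tree: every physical path connecting two distinct PMs alternates strictly between physical links and switches, beginning and ending with a link incident to a PM. First I would establish the per-path identity that a path visiting $\omega$ switches contains exactly $\omega+1$ physical links; this follows by a short induction on the ascent and descent through the core, aggregation and edge layers, and it is precisely what the three neighbor classes exhibit --- edge-layer neighbors use $\omega=1$ switch and $2$ links, aggregation-layer neighbors use $\omega=3$ switches and $4$ links, and core-layer neighbors use $\omega=5$ switches and $6$ links (cf. Eqs. (\ref{eq:edgeLayerNeighborPM})--(\ref{eq:coreLayerNeighborPM})).

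Next, for a single embedded virtual link $e_{qr}$ routed along a shortest physical path, the number of links it consumes equals $\omega(e_{qr})+1$, a strictly increasing function of the number of switches $\omega(e_{qr})$ it traverses. Hence, already for one virtual link, a routing that minimizes the switch count minimizes the link count, and conversely. I would then lift this to the whole VN: letting $\mathcal{E}$ denote the set of virtual links actually embedded onto the substrate (those whose endpoints are not co-located, recall Constraint (\ref{const:oneVM_onePM}) and the co-location exploited in the Pick-out stage), the total number of physical links used is $\sum_{e\in\mathcal{E}}\big(\omega(e)+1\big)=\big(\sum_{e\in\mathcal{E}}\omega(e)\big)+|\mathcal{E}|$, while the total number of switches used is $\sum_{e\in\mathcal{E}}\omega(e)$. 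Since $|\mathcal{E}|$ is fixed once the VM placement is fixed, the link count is an increasing affine function of the switch count, so any placement/routing minimizing the switch count simultaneously minimizes the link count. Combined with the observation that co-locating two VMs on one PM deletes that virtual link from $\mathcal{E}$ and therefore strictly decreases both sums at once, the claim follows.

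The step I expect to be the main obstacle is the bookkeeping when a single switch or physical link lies on the paths of several virtual links of the same VN and should be counted once rather than with multiplicity: the per-path identity ``links $=$ switches $+\,1$'' must be promoted to a statement about the \emph{union} of the routed paths. I would handle this by noting that in the cluster-based embedding the shared portions are themselves alternating link--switch segments anchored at a common PM or a common switch, so the ``$+1$'' offset is inherited by the union; alternatively, if the objective in Eq. (\ref{eq:objective}) is read as counting affected links and switches with multiplicity (consistent with how $I_w^v$ and $I_w^n$ are summed in Eqs. (\ref{eq:SWfailImpactOnVMs})--(\ref{eq:SWfailImpactOnNW})), the identity is exact and no further care is needed. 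Either way, the monotone relationship between switch count and link count is preserved, which is all the lemma requires.
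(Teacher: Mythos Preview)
Your argument is correct and rests on the same core observation the paper uses: in a three-layer Fat-Tree each shortest PM--PM path alternates link--switch--link, so a path through $\omega\in\{1,3,5\}$ switches uses exactly $\omega+1$ links, and hence cutting the switch count cuts the link count. The paper's own proof stops at this single-path observation --- it simply lists the cases $1,3,5$ and remarks that dropping from $5$ switches to $3$ removes the links incident to the core switch. Your treatment goes further in two useful ways: you make the per-path identity explicit as $\omega+1$ and then lift it to the whole VN via the affine relation $\sum_e(\omega(e)+1)=\big(\sum_e\omega(e)\big)+|\mathcal{E}|$, and you flag the multiplicity-versus-union issue when several virtual links share substrate resources. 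The paper does not address either point; its proof is effectively a one-path sketch, so your version is strictly more complete while following the same route.
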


\begin{proof}
In 3-ary Fat-Tree network topology, the PMs are equipped with one
network port, i.e. the degree of the server is one. This indicates
that one PM can be connected to one network switch. Further, the
number of switches required to connect two PMs can be $1$, $3$, or
$5$. When two PMs share same edge switch, the length of the path
is $1$. When two PMs are attached to same pod but different edge
switches, the length of the path is $3$. Similarly, the length of
the path between two PMs attached to different pods is $5$.
Reducing the number of switches between two PMs refers to as ignoring
the use of either core switches or aggregation switches. If the
length of path between two PMs containing $5$ switches is reduced
to $3$, the number of links that connect to the core switch is
removed. As a result, the total number of required links is
reduced.
\end{proof}

\begin{lemma}\label{lemma:minSWReq}
Giving high preference to edge layer neighbors eventually reduces
the number of required physical switches.
\end{lemma}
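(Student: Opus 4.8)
The plan is to build directly on Lemma~\ref{lemma:minSw_minLink} and on the nested structure of the three neighbor classes of Eq.~(\ref{eq:edgeLayerNeighborPM})--(\ref{eq:coreLayerNeighborPM}). First I would fix an incoming VN with VM set $V$ and, for an arbitrary feasible embedding $\kappa$, define $Sw(\kappa)$ as the number of distinct physical switches that carry at least one virtual link of the VN under shortest‑path routing. By the degree‑one property of the servers recalled in the proof of Lemma~\ref{lemma:minSw_minLink}, the switches touched by a virtual link $e_{qr}$ with $\kappa(v_q)\ne\kappa(v_r)$ are completely determined by the neighbor relation of the two hosting PMs: $1$ switch if they are edge‑layer neighbors, $3$ if aggregation‑layer neighbors, $5$ if core‑layer neighbors, and $0$ if $\kappa(v_q)=\kappa(v_r)$. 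So the whole statement reduces to controlling how often the algorithm is forced to use a pair of PMs at distance $3$ or $5$ rather than $1$.

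Second, I would record the monotonicity that drives everything. For any PM $p$, the PMs reachable with at most one switch are $\{p\}\cup PE^e(p)$; with at most three switches, $\{p\}\cup PE^e(p)\cup PE^a(p)$; with at most five switches, additionally $PE^c(p)$. These three radii are nested, and the switch cost contributed by hosting a given VM is non‑decreasing as that VM is pushed from the edge‑layer cluster of the leader PM $p^{od}$ out to an aggregation‑layer neighbor and then to a core‑layer neighbor. In the best case all $n$ VMs of the VN fit, subject to constraints (\ref{const:compResrc})--(\ref{const:NWResrc}), inside $\{p^{od}\}\cup PE^e(p^{od})$: then every virtual link is either intra‑PM or routed through the single shared edge switch, so $Sw(\kappa)\le 1$, which is plainly minimal.

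Third — the core of the argument — I would compare the VM‑Po and Pick‑out policy, which escalates from $\{p^{od}\}\cup PE^e(p^{od})$ to $PE^a(p^{od})$ and only then to $PE^c(p^{od})$ and assigns a VM to the current layer whenever the resource constraints permit, against any embedding that does \emph{not} give strict priority to edge‑layer neighbors. Take such a "regretful" embedding, pick the VM hosted farthest out whose incident virtual links could instead have been kept inside a smaller‑radius cluster that still had a compatible free slot, and re‑map it inward to that slot. By the monotonicity above this re‑mapping does not increase $Sw$; iterating it transforms any embedding into the edge‑first one without ever raising the switch count, so the edge‑preferring policy attains the minimum and in particular \emph{reduces} the number of required physical switches relative to any policy that wastes a free edge‑layer slot. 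I would then close the loop by invoking Lemma~\ref{lemma:minSw_minLink} to conclude that the same policy also minimizes the number of physical links, consistent with the second term of the objective in Eq.~(\ref{eq:objective}).

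The step I expect to be the real obstacle is making the exchange argument watertight: moving a VM inward re‑routes \emph{all} of its incident virtual links, so I must verify that none of those re‑routings lengthens a path. The key observation is that the edge cluster of $p^{od}$ is the common "center" into which the heuristic is already packing the other VMs, so a link whose other endpoint has also been packed inward can only get shorter, while a link to an endpoint still placed on an aggregation‑ or core‑layer neighbor keeps exactly the three‑ or five‑switch cost it already had. A secondary subtlety, which I would handle by stating the baseline explicitly, is that "reduces" is comparative: I would prove a weak inequality against every embedding that does not prioritize edge‑layer neighbors, strict precisely when such an embedding leaves a usable edge‑layer slot empty.
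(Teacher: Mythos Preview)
Your plan is far more ambitious than what the paper actually does. The paper's proof is a three-line observation: edge-layer neighbor PMs share a single edge switch, servers have degree~$1$, so any shortest path between two such PMs traverses exactly two physical links and hence exactly one switch; that is the minimum possible for two distinct PMs, so preferring edge-layer neighbors ``reduces the number of switches.'' There is no exchange argument, no comparison against a generic alternative embedding, and no attempt to bound $Sw(\kappa)$ globally. The lemma is stated at the same informal level, so the paper's short argument matches it.

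Your exchange argument is a genuinely different and more rigorous route, but the step you already flagged as the obstacle does not quite close. The claim that ``a link to an endpoint still placed on an aggregation- or core-layer neighbor keeps exactly the three- or five-switch cost it already had'' fails in one case: two PMs can both lie in $PE^a(p^{od})$ (same pod as $p^{od}$, different edge switch from $p^{od}$) while being edge-layer neighbors of \emph{each other}. If VMs $v_q$ and $v_r$ sit on such a pair, the link $e_{qr}$ currently uses one switch; moving $v_q$ inward to $p^{od}$'s edge switch forces $e_{qr}$ through three switches. So a single inward swap can lengthen a path, and you would need an additional accounting argument (e.g.\ that the newly used aggregation switch is offset by switches freed elsewhere, or that the distinct-switch count $Sw(\kappa)$ is still non-increasing because the leader's edge switch was already in the set) to rescue the monotonicity. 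That refinement is doable but is not what you wrote, and it is also well beyond what the paper asks for or proves.
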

\begin{proof}
As discussed earlier, edge layer neighbor PMs share common edge
switches. In Fat-Tree data center network topology, the degree of
server is $1$, which indicates the number of ports required on
each server. The PMs that share common edge switch have exactly
only one path of minimum distance. In Fat-Tree network topology,
the path  between any two PMs containing exactly two physical
links must have one switch. Hence, it is obvious that embedding
the VN onto the edge layer neighbor PMs reduces the number of
switches.
\end{proof}

\begin{lemma}\label{lemma:minLinksReq}
Giving higher preference to the edge layer neighbors eventually
reduces the number of required physical links.
\end{lemma}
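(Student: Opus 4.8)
The plan is to reduce the statement to the two preceding lemmas together with a direct edge-counting argument on the Fat-Tree. First I would recall the structural fact used in the proof of Lemma~\ref{lemma:minSWReq}: in the 3-ary Fat-Tree the degree of every server is one, so a shortest path between two PMs alternates links and switches, and the three admissible path types contain $1$, $3$, or $5$ switches. Translating switch counts into link counts, a path traversing $w$ switches is built from exactly $w+1$ physical links; hence edge-layer neighbors ($w=1$) are joined by $2$ physical links, aggregation-layer neighbors ($w=3$) by $4$, and core-layer neighbors ($w=5$) by $6$.

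Next I would argue that embedding a virtual link $e_{qr}$ onto the physical path between its two hosting PMs consumes exactly as many distinct physical links as the length of that path, while co-locating the two endpoint VMs on a single PM (which FSC-VNE actively exploits, see the discussion following Constraint~(\ref{const:oneVM_onePM})) consumes none. Consequently, for every virtual link of the VN the number of physical links charged to it is minimized by assigning its endpoints to edge-layer neighbors, or eliminated entirely by co-location. Summing this per-virtual-link comparison over the whole VN shows that giving higher preference to edge-layer neighbors yields a total physical-link count no larger than that of any embedding routing the same virtual links through aggregation- or core-layer neighbors.

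Finally, I would present the short self-contained version that chains the earlier results: by Lemma~\ref{lemma:minSWReq} preferring edge-layer neighbors reduces the number of required switches, and by Lemma~\ref{lemma:minSw_minLink} minimizing the number of switches minimizes the number of required physical links; composing these two implications gives the claim. The main obstacle --- and really the only delicate point --- is the word ``eventually'': the bound is not an unconditional optimum for an arbitrary run of the algorithm but holds whenever adjacent VMs can in fact be placed on edge-layer neighbors or on the same PM, so I would state the conclusion as a monotone aggregation of per-virtual-link comparisons rather than as an exact minimization.
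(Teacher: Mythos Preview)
Your proposal is correct, and its core direct-counting argument---that a path through $w$ switches uses $w+1$ links, so edge-layer neighbors need exactly $2$ links, the minimum possible between two distinct PMs---is precisely the paper's own proof, which consists of just that observation in two sentences. Your per-virtual-link summation, the chained invocation of Lemmas~\ref{lemma:minSWReq} and~\ref{lemma:minSw_minLink}, and the careful reading of ``eventually'' all go beyond what the paper actually writes, but they are compatible elaborations rather than a different route.
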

\begin{proof}
Similar to the Lemma \ref{lemma:minSWReq}, the path between two
PMs that are connected to same edge switch contain exactly two
physical links, which is the minimum number of physical links
between any two PMs. Hence, embedding VMs onto multiple PMs that
are connected through one edge switch requires minimum number of
physical links.
\end{proof}

\begin{theorem}
Reducing the number of PMs considering their failure probability
reduces the VN failure probability based on the proposed VNE
algorithm.
\end{theorem}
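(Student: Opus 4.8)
The plan is to first pin down what ``VN failure probability'' means in the model of Section~\ref{sec:sysModel:resFailImpact}, and then reduce the statement to an elementary monotonicity property of products. Let $\mathcal{P}_{VN}\subseteq P$ be the set of PMs that end up hosting at least one VM of the embedded VN. Because \emph{Pick-out} co-locates interconnected VMs whenever residual capacity permits, every virtual link with both endpoints on the same PM is removed from the physical substrate, and by Lemmas~\ref{lemma:minSw_minLink}--\ref{lemma:minLinksReq} the numbers of physical links and switches still in use are the minimum attainable for the chosen $\mathcal{P}_{VN}$. I would model the VN as failing iff some PM in $\mathcal{P}_{VN}$ fails or some still-used link/switch fails; since co-location only shrinks $\mathcal{P}_{VN}$ \emph{and}, by those lemmas, the residual link/switch set, it suffices to prove monotonicity separately for each part, the PM part being
\begin{equation}\label{eq:thm:VNfail}
\Pr[\text{some PM of the VN fails}] \;=\; 1-\prod_{p_i\in\mathcal{P}_{VN}}\bigl(1-fp(p_i)\bigr) \;=:\; F(\mathcal{P}_{VN}),
\end{equation}
assuming independent PM failures (or a suitable union-bound surrogate if correlation is allowed).

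The core step is monotonicity of $F$. If $\mathcal{P}_1\subseteq\mathcal{P}_2\subseteq P$, then since each $1-fp(p_i)\in[0,1]$ we have $\prod_{p_i\in\mathcal{P}_2}(1-fp(p_i))\le\prod_{p_i\in\mathcal{P}_1}(1-fp(p_i))$, hence $F(\mathcal{P}_1)\le F(\mathcal{P}_2)$, with strict inequality unless every PM in $\mathcal{P}_2\setminus\mathcal{P}_1$ has zero failure probability. In particular, the single step ``merge the VMs carried by two PMs onto one and drop the vacated PM from $\mathcal{P}_{VN}$'' — which is precisely what ``reducing the number of PMs'' means here — never increases $F$, and decreases it strictly for any PM with positive failure probability. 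Iterating this merge step from the baseline one-VM-per-PM embedding down to the embedding actually produced by VM-Po yields the claimed reduction; the link/switch parts are handled identically using Lemmas~\ref{lemma:minSw_minLink}--\ref{lemma:minLinksReq}.

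To make the ``considering their failure probability'' clause precise, I would then note that VM-Po does not merge VMs onto an arbitrary surviving PM: the leader selection in Line~\ref{algo:VM-Po:findLeader} and the inter-leader comparison in Line~\ref{algo:VM-Po:compareLeaders} retain exactly the PMs of smallest $fp$ in each edge group, while the \emph{Pick-out} preference toward edge-layer neighbors (Lemmas~\ref{lemma:minSWReq}--\ref{lemma:minLinksReq}) keeps the merged VMs within that low-$fp$ neighborhood. So the final $\mathcal{P}_{VN}$ is simultaneously smaller and composed of low-failure-probability PMs, and by the bound in~\eqref{eq:thm:VNfail} its failure probability is no larger than that of any embedding that spreads the VN over one PM per VM or that selects hosts without regard to $fp$.

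The delicate point I expect to be the real obstacle is the feedback between co-location and the failure probabilities themselves: the paper observes that $fp(p_i)$ grows with the workload $\Psi_i$, and packing several VMs onto one PM raises its workload, which could in principle inflate its $fp$ enough to offset eliminating another host. I would address this by invoking the standing assumption of Section~\ref{sec:sysModel} that the PM resource and failure data are fixed during the embedding of a single VN, so the $fp(p_i)$ in~\eqref{eq:thm:VNfail} are exactly the values the algorithm uses; removing an entire host strictly deletes a factor $1-fp$, whereas the induced load increase on a retained low-$fp$ PM acts only within that fixed snapshot. Making this fully rigorous for a \emph{dynamic} $fp(\Psi)$ would require an explicit sufficient condition on how steeply $fp$ may depend on $\Psi$ (so that the deleted factor dominates the perturbation of the retained ones), and I would state that as an explicit hypothesis rather than prove it in full generality.
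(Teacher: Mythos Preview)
Your argument is correct on its own terms, but it proceeds along a genuinely different route from the paper. The paper does not model the VN failure probability as the union event $1-\prod_{p_i}(1-fp(p_i))$; it \emph{defines} it as the arithmetic mean
\[
fp'(VN)=\frac{1}{|p'|}\sum_{p_i\in p'}fp(p_i),
\]
and then argues that, because VM-Po fills PMs in increasing order of $fp$, shrinking the host set from $p'=\{p_1,\dots,p_a\}$ to $p''=\{p_1,\dots,p_b\}$ with $b<a$ drops precisely the largest terms, so the mean decreases. Under that definition the clause ``considering their failure probability'' is load-bearing: removing an arbitrary PM could raise the mean, and only the algorithm's ordering guarantees the inequality. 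Your product formulation, by contrast, gives unconditional monotonicity in the host set, so the ordering clause becomes an optimality refinement rather than a correctness requirement; this is cleaner probabilistically and also lets you fold in the link/switch contributions via Lemmas~\ref{lemma:minSw_minLink}--\ref{lemma:minLinksReq}, which the paper's proof does not attempt. Your closing discussion of load-dependent $fp$ likewise goes beyond the paper, which simply treats the $fp(p_i)$ as fixed during embedding and does not confront that feedback.
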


\begin{proof}
Let, the virtual network $VN$ be embedded onto the set of PMs $p'
= \{p_1, p_2, \dots p_a\}, 1\le a \le |CL|$. The failure
probability of $VN$ can be written as
\begin{equation}
    fp'(VN) = \sum_{\forall p_i \in p'}\frac{1}{|p'|}{fp(p_i)}
\end{equation}
    Based on the Algorithm \ref{algo:VM-Po}, the PMs with lesser failure probability are given higher preference to pick-out the VMs. Hence, the following relation can be derived.
    \begin{equation}
    fp(p_1) < fp(p_2) < \dots < fp(p_a)
    \end{equation}
    If embed the same virtual network $VN$ can be embedded onto lesser number of PMs, the resultant PM set would be $p'' = \{p_1, p_2, \dots p_b\}, b < a$.
    After embedding the $VN$ onto the PM set $p''$, the new failure probability of $VN$ would be $ fp''(VN) = \sum_{\forall p_j \in p''}\frac{1}{|p"|}{fp(p_j)}$. The above failure probability inequality can be re written as
    \begin{equation}
     fp(p_1) < fp(p_2) < \dots < fp(p_b)  < \dots < fp(p_a)
    \end{equation}
    Which implies that,
    \begin{align}
     \nonumber \sum_{\forall p_i \in p''}\frac{1}{|p"|}{fp(p_i)}  < \sum_{\forall p_j \in p'}\frac{1}{|p'|}{fp(p_j)} \\
    \Rightarrow fp''(VN) < fp'(VN)
     \end{align}
\end{proof}

\begin{theorem}
The total impact of physical links failure on virtual links can be
minimized by reducing the number of PMs.
\end{theorem}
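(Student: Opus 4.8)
The plan is to write the \emph{total} impact of physical‑link failure on virtual links as one scalar that depends on the embedding only through how many physical links each virtual link is routed over, and then to show this scalar is monotone non‑increasing in the number of PMs used. For an embedding of the current VN onto a PM set $\pi$, define the aggregate quantity $\widehat{I}_N^n(\pi)=\sum_{\forall z_{ij}} I_N^n(z_{ij})$. Substituting Eq.~(\ref{eq:NWfailImpactOnVirtualLink}) and interchanging the order of summation, the denominator $D=\sum_{q=1}^{n-1}\sum_{r=q}^{n} e_{qr}\,\alpha_{qr}^e$ is a property of the VN alone and is independent of the embedding, while the numerator becomes $\sum_{q=1}^{n-1}\sum_{r=q}^{n} e_{qr}\,\alpha_{qr}^e\,\ell_{qr}$, where $\ell_{qr}=\sum_{\forall z_{ij}}\lambda_{ij}^{qr}$ is the number of physical links on the path onto which $e_{qr}$ is mapped; in particular $\ell_{qr}=0$ whenever $v_q$ and $v_r$ are co‑located on a single PM, since then the virtual link is not mapped onto the substrate at all. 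Hence $\widehat{I}_N^n(\pi)=\tfrac{1}{D}\sum_{q=1}^{n-1}\sum_{r=q}^{n} e_{qr}\,\alpha_{qr}^e\,\ell_{qr}$, and only the factors $\ell_{qr}$ respond to the embedding.

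Next I would compare two embeddings produced by the proposed algorithm for the same VN, one on a PM set $p'$ and one on a strictly smaller set $p''$ with $|p''|<|p'|$, in the spirit of the preceding theorem. The structural claim I need is that shrinking the PM set can only \emph{enlarge} the set of intra‑PM virtual links. This follows from the $Pick\text{-}out$ rule in Algorithm~\ref{algo:po}: each active PM repeatedly co‑locates the highest‑bandwidth VM pair that is still available until its computing capacity, constrained by (\ref{const:compResrc})–(\ref{const:NWResrc}), is exhausted; so if fewer PMs are permitted, more VMs are forced to share a host, and every pair $(q,r)$ with $\ell_{qr}'=0$ still satisfies $\ell_{qr}''=0$, with at least one additional pair absorbed whenever $|p''|<|p'|$ for a non‑trivial VN. For the virtual links that remain inter‑PM, the edge‑layer‑neighbor preference baked into the VM‑Po algorithm, together with Lemmas~\ref{lemma:minSWReq} and~\ref{lemma:minLinksReq}, keeps the surviving co‑located groups on PMs that share an edge switch whenever possible, so the route of any still‑embedded virtual link is no longer than under $p'$. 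Combining the two cases yields $\ell_{qr}''\le\ell_{qr}'$ for every $q<r$.

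Finally I would conclude term by term: since $e_{qr}\in\{0,1\}$, $\alpha_{qr}^e>0$, and $\ell_{qr}''\le\ell_{qr}'$ for all $q<r$, the numerators obey $\sum_{q<r} e_{qr}\,\alpha_{qr}^e\,\ell_{qr}''\le\sum_{q<r} e_{qr}\,\alpha_{qr}^e\,\ell_{qr}'$; dividing both sides by the common, embedding‑independent $D$ gives $\widehat{I}_N^n(p'')\le\widehat{I}_N^n(p')$, with strict inequality as soon as one extra virtual link becomes intra‑PM. This is precisely the statement that reducing the number of PMs minimizes the total impact of physical‑link failure on virtual links.

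The step I expect to be the main obstacle is the structural claim that collapsing the embedding onto fewer PMs genuinely increases co‑location under the specific greedy $Pick\text{-}out$ rule, rather than merely permuting the assignment; making this rigorous needs a monotonicity/exchange argument on the order in which VM pairs are picked and on the capacity bounds, and one must additionally rule out the pathological case in which merging two hosts pushes a previously short inter‑PM path into a longer one — which is exactly where the edge‑neighbor preference and Lemmas~\ref{lemma:minSWReq}–\ref{lemma:minLinksReq} must be invoked to bound the residual path lengths.
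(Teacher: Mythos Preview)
Your argument is correct in substance but follows a genuinely different and considerably more rigorous route than the paper. The paper's own proof is a two-line observation: when $v_q$ and $v_r$ are picked out by the same PM, the virtual link $e_{qr}$ is not embedded at all, so $\lambda_{ij}^{qr}=0$ for every physical link $z_{ij}$ (formally, $\lambda_{ij}^{qr}=0$ whenever $\kappa_i^q=\kappa_j^r=1$ with $i=j$); plugging this into Eq.~(\ref{eq:NWfailImpactOnVirtualLink}) shrinks the numerator and hence the per-link impact below $1$. That is the entire argument---no aggregate metric, no comparison of two embeddings, no monotonicity in $|p'|$ versus $|p''|$.

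What you do differently is (i) introduce the summed quantity $\widehat{I}_N^n(\pi)=\sum_{z_{ij}} I_N^n(z_{ij})$, which the paper never defines, (ii) rewrite it via the path-length counts $\ell_{qr}$, and (iii) prove a genuine monotonicity statement by comparing two embeddings and invoking Lemmas~\ref{lemma:minSWReq}--\ref{lemma:minLinksReq} to control the residual inter-PM paths. This buys you an actual inequality $\widehat{I}_N^n(p'')\le\widehat{I}_N^n(p')$ rather than the paper's qualitative ``can be minimized to less than $1$,'' and it makes explicit the dependence on the greedy co-location rule in Algorithm~\ref{algo:po}. The price is the structural/exchange argument you flag at the end, which the paper simply does not attempt; the paper's proof sidesteps that difficulty by not claiming monotonicity at all, only that co-location zeros out contributions. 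Your version is the stronger and more honest reading of the theorem statement.
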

\begin{proof}
Let, the virtual network $VN$ consists of $a$ number of VMs and
$b$ number of virtual links. According to the proposed FSC-VNE
algorithm, the PMs with enough available resources pick-out
multiple VMs. Let, VM $v_q, 1 \le q \le a$ and $v_r, 1 \le r \le
a, q\ne r$ be picked-out by a single PM. In such scenario, the
bandwidth demand of the virtual link between the VM $v_q$ and
$v_r$ can be ignored. Hence, the value of $\lambda_{ij}^{qr}$ can
be rewritten as given in Eq. \ref{eq:lambdaDef}.
    \begin{equation}
        \lambda_{ij}^{qr}=0, \text{ when } \kappa_i^q =1, \kappa_j^r =1, i=j
    \end{equation}
Further, applying above equation onto the Eq.
\ref{eq:NWfailImpactOnVirtualLink} for all virtual links, the
maximum impact onto the virtual links can be minimized to less
than 1.
\end{proof}

\begin{lemma} \label{lemma:CommOverhead}
The communication overhead of the proposed FSC-VNE algorithm is
$(n\partial - 1)(m'-1)$, where $\partial$ be the amount of memory
required to store the current status of one VM and $m'=|CL|$.
\end{lemma}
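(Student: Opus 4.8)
The plan is to count, event by event, the messages that travel \emph{between} PMs once the distributed VM-Po stage begins; the PM-CF stage (Algorithm~\ref{algo:algo1_clusterFormation}) runs entirely on the single embedding server and hence produces no inter-PM traffic, and it is precisely this inter-PM traffic that ``communication overhead'' refers to here. Inspecting Algorithm~\ref{algo:VM-Po} and Algorithm~\ref{algo:po}, I would first argue that the coordination steps carry no real payload: the leader selection (Line~\ref{algo:VM-Po:findLeader}) and the accumulated–failure comparison (Lines~\ref{algo:VM-Po:calcFP}--\ref{algo:VM-Po:compareLeaders}) use only the per-PM failure probabilities, which every PM already stores, so they cost $O(1)$ control bits and can be absorbed. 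The one substantive payload is the argument of $Pick-out(V)$, namely the status of the VMs not yet embedded; since storing one VM's status costs $\partial$, a message carrying $V$ has size at most $n\partial$, and strictly less once the forwarding PM has removed at least one already-allocated VM from $V$ before delegating (the ``Remove VMs that are already allocated'' step). Bounding this residual payload by $n\partial-1$, using $\partial\ge 1$, fixes the first factor of the product.

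For the second factor I would bound the number of $Pick-out(V)$ transmissions by $m'-1$. The delegation in Algorithm~\ref{algo:VM-Po} fans out from the super-leader $SP$: $SP$ first hands $V$ to the members of $PE^e(p^{od})$, then --- only while VMs remain --- to the edge-layer leaders lying in $PE^a(p^{od})$, and finally to those in $PE^c(p^{od})$, each such leader in turn forwarding $Pick-out(V)$ to its own edge-neighbour group. The structural claim I need is that these successive groups are pairwise disjoint and are all contained in $CL$, so that no PM is ever sent $V$ twice; granting this, the number of recipients is at most $|CL|-1$, the ``$-1$'' accounting for $SP$, which holds $V$ from the start and never receives it. Multiplying the per-message bound $n\partial-1$ by the recipient bound $m'-1$ yields the claimed overhead $(n\partial-1)(m'-1)$.

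The step I expect to be the main obstacle is the disjointness / no-revisit property underpinning the $m'-1$ count: one must show, from the three-layer Fat-Tree structure together with the definitions in Eqs.~(\ref{eq:edgeLayerNeighborPM}), (\ref{eq:aggrLayerNeighborPM}) and (\ref{eq:coreLayerNeighborPM}), that the PE-groups reached along the edge-, aggregation- and core-layer chains partition a subset of $CL$ and never overlap, so that the bound is not an undercount. This is a finite but somewhat delicate case analysis on the pod- and edge-switch-membership of the PMs in $CL$. A secondary subtlety, already flagged above, is making the single-message estimate tight enough to read exactly $n\partial-1$ rather than the looser $n\partial$; this rests on observing that every forwarding PM strips at least one resolved VM from $V$ before transmitting it.
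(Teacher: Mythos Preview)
Your decomposition differs from the paper's in one structural respect that matters. The paper does not obtain the factor $n\partial-1$ by arguing that each forwarded copy of $V$ has had at least one VM stripped; indeed, the very first broadcast inside $Pick\text{-}out$ (Algorithm~\ref{algo:po}, Lines~1--3) is issued \emph{before} the leader picks any VM, so your stripping argument already fails for those initial messages. Instead, the paper counts traffic in \emph{two} directions: the leader sends the full $n$-VM status, i.e.\ $n\partial$ packets, to each of the other $m'-1$ PMs in $CL$ (giving $n\partial(m'-1)$ outgoing packets), and then \emph{receives} one single-packet reply from each of those $m'-1$ PMs (giving $m'-1$ incoming packets), for a total of $2(m'-1)$ messages. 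Your one-directional count with a reduced per-message payload therefore omits the return acknowledgements that the paper explicitly tallies, and rests on a per-message ``$-1$'' that the order of operations in Algorithm~\ref{algo:po} does not support.

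Your disjointness/no-revisit case analysis is also more than the paper actually uses: the paper simply treats the leader as communicating once with each of the remaining $m'-1$ cluster members, without any Fat-Tree partition argument. (As an aside, summing the paper's two counted quantities yields $(n\partial+1)(m'-1)$ rather than $(n\partial-1)(m'-1)$, so the sign in the paper's final line appears to be a slip; but the intended accounting is the two-way send/receive one, not the stripped-payload one you propose.)
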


\begin{proof}
The VNs are embedded in two stages according to the proposed
algorithm. As the first stage follows the centralized approach,
there is no communication overhead. However, in second stage, the
leader PM with minimum total failure probability sends the updated
information regarding all the VMs with other edge layer neighbor
PMs and other neighbor PMs. Let, $\partial$ be the amount of
memory required to store the current status of one VM. Current
status infers, if a VM is picked-out by any PM. For $m' = |CL|$
number of selected PMs, a total of $2(m' -1)$ number of
communications are needed for sending and receiving the messages.
The leader PM sends a total of $n*\partial*(m'-1)$ number of data
packets to all neighbor PMs. The leader PM also receives $m'-1$
number of single data packet messages. Hence, to embed a VN with
$n$ VMs, a total of $(n\partial - 1)(m'-1)$ number of data packets
are transmitted over the physical network.
\end{proof}

\begin{theorem}
The proposed semi-centralized approach makes the VNE process
faster as compared to that of the centralized one.
\end{theorem}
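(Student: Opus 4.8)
The plan is to compare the total running time of embedding a single VN under the centralized model against the two-stage semi-centralized model, by counting the number of VM-placement operations each approach performs sequentially. The key observation is that in the centralized approach a single dedicated embedder must process all $n$ VMs of the VN one after another, whereas in the VM-Po stage the edge-layer neighbor PMs in a cluster execute the $Pick\text{-}out$ routine concurrently, so the placement work is parallelized across up to $|PE^e(p^{od})|+1$ machines per leader group. I would first fix notation: let $T_c$ be the time taken by the centralized algorithm to embed one VN and let $T_{sc}=T_{cf}+T_{po}$ be the time taken by FSC-VNE, where $T_{cf}$ is the PM-CF (first, centralized) stage time and $T_{po}$ is the VM-Po (second, distributed) stage time.

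\textbf{Main steps.} First I would bound $T_{cf}$: Algorithm~\ref{algo:algo1_clusterFormation} performs a constant number of linear scans over the $m$ PMs plus one sort, so $T_{cf}=O(m\log m)$, and crucially it is independent of how the VMs interleave — it is pure preprocessing. Second, I would bound $T_{po}$: in the worst case the VMs are spread across the cluster, but at each leader group the PMs run $Pick\text{-}out$ in parallel (Algorithm~\ref{algo:po}), each iteration placing a pair of VMs; since the groups of edge-layer neighbors act concurrently, the sequential depth of the VM-Po stage is proportional to $\lceil n/2 \rceil$ placement rounds in the very worst case but typically far fewer, plus the communication overhead $(n\partial-1)(m'-1)$ established in Lemma~\ref{lemma:CommOverhead}. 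Third, I would observe that the centralized embedder, lacking parallelism, must perform all $n$ VM placements sequentially \emph{and} additionally pay the first-come-first-serve waiting time incurred because one server handles every VN; so $T_c \ge c_1 n$ for some constant $c_1$ reflecting per-VM placement cost, with no parallel speedup. Putting these together, $T_{sc}=O(m\log m)+O(n)$ sequential work distributed over a cluster, while $T_c=\Omega(n)$ purely sequential, and since the dominant embedding effort (the $n$ placements) is parallelized in the semi-centralized scheme whereas it is serialized in the centralized scheme, $T_{sc}<T_c$ for VNs beyond a small constant size, and the gap widens with the arrival rate of VNs because the centralized waiting time grows linearly in the backlog while the semi-centralized scheme pipelines preprocessing and distributed placement.

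\textbf{The hard part} will be making the parallelism argument rigorous without an explicit cost model for a single VM placement, and in particular justifying that the communication overhead $(n\partial-1)(m'-1)$ from Lemma~\ref{lemma:CommOverhead} does not swamp the savings — one must argue that message passing among cluster PMs is cheap relative to the resource-matching and path-checking computation that a centralized embedder repeats for every VM against the full physical network of $m$ machines. I would handle this by noting that the centralized algorithm's per-VM work scales with $m$ (it searches the whole substrate), whereas in VM-Po each PM only inspects its $O(1)$-sized neighbor sets $PE^e, PE^a, PE^c$, so even after adding the linear communication term the semi-centralized total is asymptotically smaller; the PM-CF preprocessing cost $O(m\log m)$ is a one-time additive term amortized over all $n$ VMs and does not affect the comparison once $n$ or the VN arrival rate is moderately large.
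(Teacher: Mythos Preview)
Your argument is reasonable but takes a genuinely different route from the paper's. The paper frames the proof almost entirely as a \emph{waiting-time (queueing)} comparison rather than a single-VN complexity comparison: it introduces abstract per-VN times $t^c$ (centralized embedding), $t^{s'}$ (PM-CF stage), and $t^{s''}$ (per-VM cost in VM-Po), lets $v^T$ VNs arrive per time unit, and then compares cumulative queue delays --- roughly $\tfrac{1}{2}t^c v^T(v^T{+}1)-v^T$ for the centralized server versus $t^{s'}(v^T{-}1)$ for the semi-centralized one. The two key assertions are (i) $t^{s'}<t^c$ because PM-CF selects a PM set without inspecting individual VMs, and (ii) the VM-Po time $t^{s''}$ does not enter the waiting-time calculation at all, since each VN is handed to a \emph{different} PM set and the stage therefore pipelines across VNs. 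From these it concludes $(t^{s'}+t^{s''})<t^c$ and that the queue drains faster. By contrast, you bound the cost of embedding a \emph{single} VN via explicit asymptotics ($O(m\log m)$ for PM-CF, parallel depth in VM-Po, the communication term from Lemma~\ref{lemma:CommOverhead}) and treat the backlog effect only as a closing remark. Your version is more concrete about where the savings come from, but be aware that your inequality $T_{sc}<T_c$ does not follow from $T_{sc}=O(m\log m)+O(n)$ and $T_c=\Omega(n)$ alone; you need the extra, unproven premise that the centralized per-VM work scales with $m$. The paper dodges this gap by simply asserting $t^{s'}<t^c$, so neither argument is fully rigorous --- but the paper's intended mechanism is the pipelining/throughput effect across many arriving VNs, not the asymptotic cost of one embedding.
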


\begin{proof}
In both centralized and semi-centralized approach, each VN must
wait before they are submitted to the embedder.
However, the major difference in both aforementioned approaches is
the time that every VN needs to spend in the waiting queue. In the
centralized version, the VN needs to wait until all the VMs of the
previous VN are assigned to the suitable PMs unlike in the
semi-centralized approach. \rtwo{Let, $t^c$ units of time be taken
in the centralized algorithm to embed a virtual network of $n$
numbers of VMs. In each time slot, let $v^T$ number of VNs be
arrived to the system. Since, all the VNs embedding will be
carried out by the single server, the total waiting time of the
VNs that are arrived in the first time unit can be calculated as
$\frac{1}{2}[t^c*v^T*(v^T+1)]-v^T$. Choosing a set of suitable PMs
for the entire VN without considering each VM's resource
requirement would take lesser time, which is implemented in the PM-CF
stage. Let, $t^{s'}$ be the time required to select a set of PMs
for each VN. Hence, each VN needs to wait until the previous VN is
processed in the PM-CF stage. Mathematically, $t^c > t^{s'}$. }

\rtwo{This infers that a larger number of VNs are being processed
by the server at any particular instant of time in the PM-CF stage
as compared to the server in the centralized approach. Further,
multiple PMs are assigned to embed all the VMs in the VM-Po stage.
Let, $t^{s"}$ be the time taken to embed one VM. A VN of $n$
number of VMs would take a total of $n*t^{s"}$ units of time, if
all the VMs are processed in a sequential manner. Since, multiple
PMs are involved in embedding all the VMs in a parallel manner,
the total time required to embed $n$ number of VMs would take
$t^{s"}$ units of time. Combining the time required to embed one
VN consisting of $n$ number of VMs in the PM-CF and VM-Po stages,
the total time can be calculated as $(t^{s'} + t^{s"})$. It is
obvious that $(t^{s'} + t^{s"}) < t^c$ as $t^c$ involves the
embedding time of all VMs in a sequential manner and $(t^{s'} +
t^{s"})$ involves in embedding all the VMs in parallel manner.
Taking the above scenario into consideration, the total waiting
time of the VNs that are arrived in the first time unit can be
calculated as $t^{c'}*(v^T-1)$, where $v^T$ number of VNs arrives
into the system in a single time unit. In the above calculation,
the time required to embed in the VM-Po stage is not taken into
consideration as different sets of PMs are selected for each VN in
the PM-CF stage. This indicates that the VNs are embedded onto the
physical network in the proposed semi-centralized approach by
taking lesser time, which makes the embedding process faster.}
\end{proof}

\begin{theorem}\label{lemma:complexity_PM_CF}
The time and space complexity of PM-CF algorithm are
$\mathcal{O}((m\log{m} )+ n)$ and  $\mathcal{O}(m+ n^2)$,
respectively , where $m$ represents the number of PMs and $n$ is
the number of VMs in the VN.
\end{theorem}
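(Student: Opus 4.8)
The plan is to walk through Algorithm~\ref{algo:algo1_clusterFormation} line by line, bound the cost of each step, and then collect the dominant terms. For the running time: Lines~\ref{algo:cluster:init_P}--\ref{algo:cluster:calcN} merely bind $P$ and $V$ and read $n=|V|$, costing $\mathcal{O}(1)$ beyond what is charged below. Lines~\ref{algo:cluster:getMaxMem}, \ref{algo:cluster:getMaxCPU} and \ref{algo:cluster:getMaxNwReq} each compute a maximum over the $n$ VMs, so $\mathcal{O}(n)$ apiece, and Line~\ref{algo:cluster:getBWreq} sums the bandwidth demands of the virtual links incident to each VM, which I would charge as $\mathcal{O}(n)$ in total by accounting that the aggregate demand $\hat{\alpha}_i^e$ of Eq.~(\ref{eq:def:vmBWreq}) is supplied with the VN request (equivalently, that the VN carries $\mathcal{O}(n)$ virtual links). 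Lines~\ref{algo:cluster:form_nw}--\ref{algo:cluster:form_C} each scan the $m$ PMs once to build $\ddot{P}_1,\ddot{P}_2,\ddot{P}_3$, so $\mathcal{O}(m)$; Line~\ref{algo:cluster:sort_nw_M_C} forms the three-way intersection (linear in $m$ using a membership array) and then sorts it, which dominates at $\mathcal{O}(m\log m)$; and Line~\ref{algo:cluster:choose_final_cluster} takes the first $n\cdot c$ entries of $R$ with $c$ a constant fixed by the CSP, so $\mathcal{O}(n)$. Summing gives $\mathcal{O}(n)+\mathcal{O}(m)+\mathcal{O}(m\log m)+\mathcal{O}(n)=\mathcal{O}\bigl((m\log m)+n\bigr)$.

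For the space bound I would argue similarly. Storing $P$ together with the per-PM attributes $\beta_i^{mem},\beta_i^{CPU},\beta_i^n$ costs $\mathcal{O}(m)$, and the intermediate sets $\ddot{P}_1,\ddot{P}_2,\ddot{P}_3$, the sorted list $R$, and the output cluster $CL$ are all subsets of $P$, contributing $\mathcal{O}(m)$ in aggregate. On the VN side, the $n$ VMs with their scalar demands $\alpha_i^{mem},\alpha_i^{CPU}$ need $\mathcal{O}(n)$, but the virtual-link bandwidth demands $\alpha_{ij}^e$ referenced in Line~\ref{algo:cluster:getBWreq} require, in the worst case of a densely connected VN, $\mathcal{O}(n^2)$ storage. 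Adding the two sides yields $\mathcal{O}(m+n^2)$.

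The delicate point — and the step I expect to be the main obstacle — is Line~\ref{algo:cluster:getBWreq}: a fully connected VN has $\Theta(n^2)$ virtual links, so recomputing every $\alpha_i^{nw}$ from scratch is $\Theta(n^2)$, which would break the claimed $\mathcal{O}(n)$ additive term in the time bound. I would resolve this either by invoking the modelling assumption that each VM is adjacent to $\mathcal{O}(1)$ virtual links, or by treating $\hat{\alpha}_i^e$ as part of the request already presented to the PM-CF stage, consistent with how Eq.~(\ref{eq:def:vmBWreq}) is used elsewhere in the paper. Under that convention the time bound follows cleanly, while the $n^2$ term legitimately survives in the space bound because the link-weight structure of the VN still has to be held in memory.
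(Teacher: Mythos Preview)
Your proposal is correct and follows essentially the same line-by-line accounting as the paper's own proof, arriving at the same summands $\mathcal{O}(n)+\mathcal{O}(m)+\mathcal{O}(m\log m)$ for time and $\mathcal{O}(m)+\mathcal{O}(n^2)$ for space. If anything you are more careful than the paper: you flag the $\Theta(n^2)$ risk in Line~\ref{algo:cluster:getBWreq} and explicitly state the assumption needed to keep it linear, whereas the paper simply asserts $\mathcal{O}(3n)$ for Lines~\ref{algo:cluster:getMaxMem}--\ref{algo:cluster:getMaxNwReq} without comment.
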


\begin{proof}
In PM-CF algorithm, initialization instruction as mentioned in
Line \ref{algo:cluster:init_P}-\ref{algo:cluster:calcN} takes
constant time $\mathcal{O}(1)$. Calculating the maximum resource
requirement as in
Line~\ref{algo:cluster:getMaxMem}-\ref{algo:cluster:getMaxNwReq},
it takes $\mathcal{O}(3n)$ running time as we are only considering
three types of resource requirement. The time complexity for
forming the set $\ddot{P}_1, \ddot{P}_2, \text{and} \ddot{P}_3$ in
Line \ref{algo:cluster:form_nw}-\ref{algo:cluster:form_C} is
$\mathcal{O}(3n)$ time as in the worst case scenario, all PMs may
satisfy the resource constraint. The time required to sort the set
$\ddot{P}_1 \cap \ddot{P}_2 \cap \ddot{P}_3$ as in Line
\ref{algo:cluster:sort_nw_M_C} can be minimized to
$\mathcal{O}(m\log{}m)$ by using any existing efficient sorting
algorithm. Hence, the total time complexity can be calculated as
\begin{equation}
T(\text{PM-CF}) = \mathcal{O}(3n) + \mathcal{O}(3m) + \mathcal{O}(m\log{}m)
\end{equation}
The above time complexity of PM-CF algorithm in worst case
scenario can be concluded as $ \mathcal{O}(m\log{}m + n)$.

Let, $\delta$ be the constant amount of the space required to
store the details of each PM. The total space required to store
the entire PM network is $\mathcal{O}(\delta*m)$. Further
considering the link representation of the VN, which includes the
virtual node and virtual link representation, the space complexity
of the PM-CF algorithm is $\mathcal{O}(n^2)$. Hence, the space
complexity of the PM-CF algorithm can be concluded as
$\mathcal{O}(m+ n^2)$.
\end{proof}

\begin{theorem}\label{lemma:complexity_VM-Po}
The time and communication complexity of the VM-Po algorithm are
$\mathcal{O}(m'+n^3)$ and $\mathcal{O}(2(m'-1))$, respectively.
\end{theorem}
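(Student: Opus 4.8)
The plan is to walk through Algorithm~\ref{algo:VM-Po} together with the Pick-out subroutine of Algorithm~\ref{algo:po} line by line, extract the dominant term of each block, and then argue that the leader-election and dispatch steps contribute only an additive $\mathcal{O}(m')$ while the repeated max-bandwidth VM-pair selection contributes the $\mathcal{O}(n^3)$ term; the communication bound will then follow by counting the instruction/status message pairs exchanged among the $m'=|CL|$ cluster PMs.

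First I would dispose of the coordination part of Algorithm~\ref{algo:VM-Po}. Reading the machine id and the representative PM $p^{od}$ is $\mathcal{O}(1)$; forming $PE^e(p^{od})$ from Eq.~(\ref{eq:edgeLayerNeighborPM}) and then $PE$ is a single scan of the cluster, i.e. $\mathcal{O}(m')$; choosing the minimum-$fp$ leader, accumulating $FP$, comparing the (at most $m'$) leaders, and selecting $SP$ are each one more pass, so $\mathcal{O}(m')$ in total. The two \textbf{foreach} loops over $PE^a(p^{od})$ and $PE^c(p^{od})$ range over $\mathcal{O}(m')$ PMs and do $\mathcal{O}(n)$ bookkeeping per PM to drop already-allocated VMs, but they merely dispatch work and stop once $V$ is empty, so their local cost is $\mathcal{O}(n m')$, which is absorbed into $\mathcal{O}(m' + n^3)$.

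Next I would bound Pick-out (Algorithm~\ref{algo:po}), which is where the cubic term originates. Locating the minimum-load PM $p_c$ over $PE$ in Line~\ref{algo:po:findPM} is $\mathcal{O}(m')$. The main loop repeatedly picks, in Line~\ref{algo:po:findVMPair}, the not-yet-embedded VM pair of maximum bandwidth demand; since the VN topology is arbitrary it can have up to $n(n-1)/2$ virtual links, so each such selection costs $\mathcal{O}(n^2)$. By the guarantee established for the PM-CF stage (every cluster PM can host the single largest VM), each jump back to Line~\ref{algo:po:findPM} either places at least one VM or advances to the next-least-loaded PM, so the number of loop iterations is $\mathcal{O}(n)$; multiplying gives $\mathcal{O}(n\cdot n^2)=\mathcal{O}(n^3)$ for the loop, and adding the $\mathcal{O}(m')$ for PM selection and for the broadcast of the instruction to $PE$ yields $\mathcal{O}(m' + n^3)$ for Pick-out, hence for the whole VM-Po algorithm.

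Finally, for the communication complexity I would count only messages sent in the distributed second stage (the PM-CF stage being centralized and thus communication-free, exactly as used in Lemma~\ref{lemma:CommOverhead}). The only messages in Algorithm~\ref{algo:VM-Po} are the instructions the elected PM $SP$ forwards to the remaining cluster PMs so they execute Pick-out, and the status replies those PMs return; with $m'-1$ non-leader PMs, each one receives one instruction and sends one status, for $2(m'-1)$ messages total, i.e. $\mathcal{O}(2(m'-1))$, consistent with the per-message payload counted in Lemma~\ref{lemma:CommOverhead}. The hard part will be the rigorous bound on the iteration count of the Pick-out loop, since the jumps back to Line~\ref{algo:po:findPM} make the control flow non-obvious; the argument must rely on the PM-CF property that every cluster PM can host any individual VM, so that each iteration makes progress and no PM is revisited unboundedly.
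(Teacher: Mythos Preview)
Your proposal is correct and follows essentially the same line-by-line decomposition as the paper: an $\mathcal{O}(m')$ coordination/leader-election cost in Algorithm~\ref{algo:VM-Po}, an $\mathcal{O}(n^3)$ cost for the repeated $\mathcal{O}(n^2)$ max-bandwidth pair search in Algorithm~\ref{algo:po} over $\mathcal{O}(n)$ iterations, and the $2(m'-1)$ instruction/reply messages for communication via Lemma~\ref{lemma:CommOverhead}. The paper's argument is slightly terser (it simply asserts the Pick-out loop repeats $\mathcal{O}(n/2)$ times and that the two \textbf{foreach} loops cost $\mathcal{O}(m')$, without your extra $\mathcal{O}(nm')$ bookkeeping term or the progress argument you flag as ``the hard part''), but the structure and conclusions are the same.
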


\begin{proof}
VM-Po algorithm executes in $m' = |CL|$ number of PMs in a
distributed approach. Before calculating the time complexity of
the VM-Po algorithm, it is necessary to analyze the running time
complexity of Algorithm~\ref{algo:po} as it is invoked from the
VM-Po algorithm for multiple times.

In Algorithm \ref{algo:po}, finding PM with minimum load as in
Line \ref{algo:po:findPM} has the time complexity of
$\mathcal{O}(m')$. Assuming that the link representation is used
to keep the information of VN, in the worst case,
$\mathcal{O}(n^2)$ numbers of comparisons are required to find the
VM pair with maximum bandwidth, which is repeated for
$\mathcal{O}(\frac{n}{2})$ number of times. Here, picking out the
VM from the selected virtual edge takes a constant time. Hence,
the total running time of the Pick-out algorithm can be concluded
to $\mathcal{O}(n^2 * \frac{n}{2}) \approx \mathcal{O}(n^3)$.

In Algorithm \ref{algo:VM-Po}, forming the set of edge layer PMs
in Line \ref{algo:VM-Po:formPE} requires the running time of
$\mathcal{O}(m')$. The running time for calculating the total
failure probability as in Line \ref{algo:VM-Po:calcFP} is
constant. Following this as in Line
\ref{algo:VM-Po:compareLeaders}, finding the leader with minimum
$FP$ value takes $\mathcal{O}(m')$ running time. The two
\textit{for} loops require $\mathcal{O}(m')$ computational time.
As the running time of Pick-out algorithm is $\mathcal{O}(n^3)$,
the total time complexity of the VM-Po algorithm can be concluded
to $\mathcal{O}(m' + n^3)$.

The communication complexity of the VM-Po algorithm can be seen as
$\mathcal{O}(2(m'-1))$ as discussed in Lemma
\ref{lemma:CommOverhead}. The leader PM that is calculated in Line
\ref{algo:VM-Po:calcFP} invokes the Pick-out algorithm and sends
the instruction to other $m'-1$ number of PMs in order to execute
the Pick-out algorithm. Upon execution of the Pick-out algorithm,
$m'-1$ number of messages are received by the leader PM. As a
result, the total number of communications established in second
stage is $2(m'-1)$.
\end{proof}

\section{Performance Evaluation} \label{sec:perfEvaluation}

In this section, we evaluate the performance of proposed FSC-VNE
algorithm by using a discrete event Java-based simulator. We
compare our algorithm against two popular centralized VNE
algorithms VNE-DCC \cite{2016-5} and VIE-SR \cite{2015-2}
considering the performance metrics such as acceptance ratio and
average number of required resources.

\subsection{Simulation Setup}
In the simulation environment, one data center is formed with a
number of PMs that are connected by the Fat-Tree network
topology. The simulation is performed on the small scale
data center, where each switch in the data center is equipped with
$k=32$ number of ports. As discussed in Section
\ref{sec:sysModel}, the maximum number of PMs that can be
connected with $32$ ports switches is $m=\frac{32^3}{4} = 8192$
and the total number of switches required is $1280$ including the
core, aggregation, and edge switches. In the simulation, it is assumed that the failure events are independent and random. These events also do not occur frequently. Based on such properties, the number of failure events is modeled as a Poisson process with average number of failures ($\lambda$) $0.01$. The value of lambda is chosen to be very less to reflect the practical scenarios, where the number of failures occurring in the actual data center is relatively low. In the simulation, assignment of the physical resources such as CPU, memory, and storage follows
the Random distribution. The maximum amount of storage and memory
of a PM ranges between $1000 GB - 2000 GB$ and $20GB - 50GB$,
respectively. The number of processing units in each PM is
randomly distributed between $8$ through $32$. The maximum
bandwidth capacity on each physical link is $100$ Gbps.

The number of VMs in each VN ranges between $2$ through
$10$.\rthree{The arrival rate of the incoming virtual network
follows the Poisson distribution with the mean arrival rate of
$10$ VNs per $100$ units of time. The mean arrival rate refers to
as the average number of VNs arriving in every $100$ time units.
Similarly, the lifetime of the VNs follows the exponential
distribution with mean lifetime of $250$ time units. The lifetime
of a VN refers to as the number of time units after which the VN
is terminated.} No specific topology is followed to construct a
VN. However, a probability value is assigned to each virtual link,
called virtual link probability. A VN with virtual link
probability of $1$ results in fully connected VN. The link
probability of VNs in the simulation environment ranges between
$0.5-0.8$. Similar to the physical network, virtual resources are
distributed among the VMs by following the random distribution.
The resource configurations of the VNs are not strictly
heterogeneous, which infers that some VNs can be homogeneous in
terms of their resource configurations. The CPU requirement for
each VM ranges from $1$ through $4$ numbers of CPUs. The required
amount of memory and storage ranges from $500 MB$ through $4 GB$
and $8GB$ through $20GB$, respectively. Similarly, the bandwidth
requirement for each VN ranges from $50 Mbps$ through $1 Gbps$.

The performance of the FSC-VNE algorithm is evaluated mainly in
three metrics: acceptance ratio, amount of resource allocated, and
embedding time. Acceptance ratio is defined as the ratio between
the number of VNs accepted and the number of VNs received by the
CSP from users. Amount of allocated resource  refers to as the number
of PMs and physical links required per VN. Embedding  time is the
time required to embed the VN onto physical network, which involves
the waiting time of VN and the time taken by the algorithm to
carry-out the embedding procedure. Taking the above-mentioned
performance matrices and the simulation environment, the following
simulation results are derived.

\begin{figure}[!hb]
    \begin{center}
        \epsfig{file=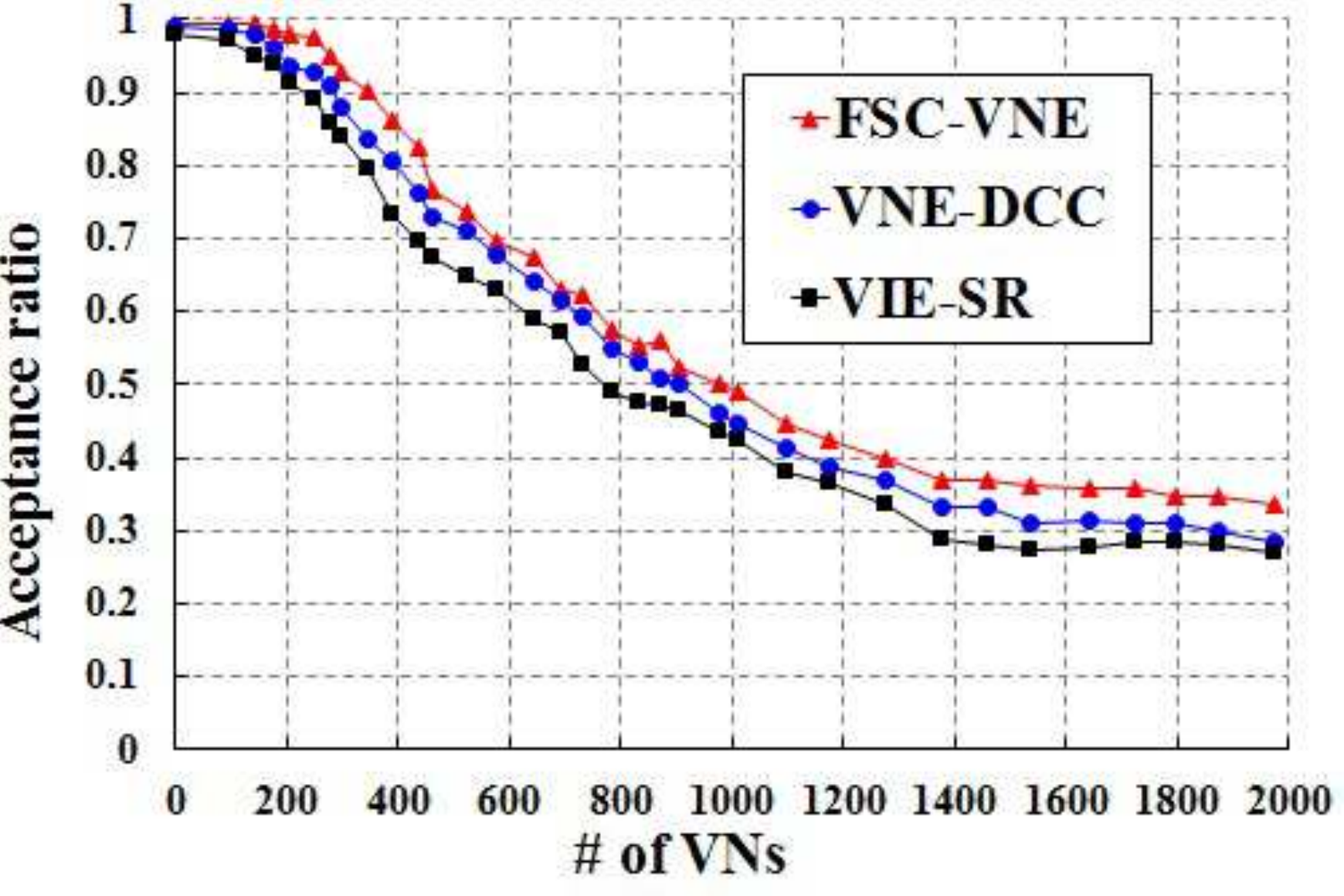,width=60mm}\vspace{-3mm}
        \caption{Acceptance ratio.}\vspace{-4mm}
        \label{fig:sim:acceptanceRatio}
    \end{center}
\end{figure}

\begin{figure*}[!h]
    \centering
    \subfigure[Average number of PMs required per VN]
    {
        \includegraphics[width=0.30\linewidth]{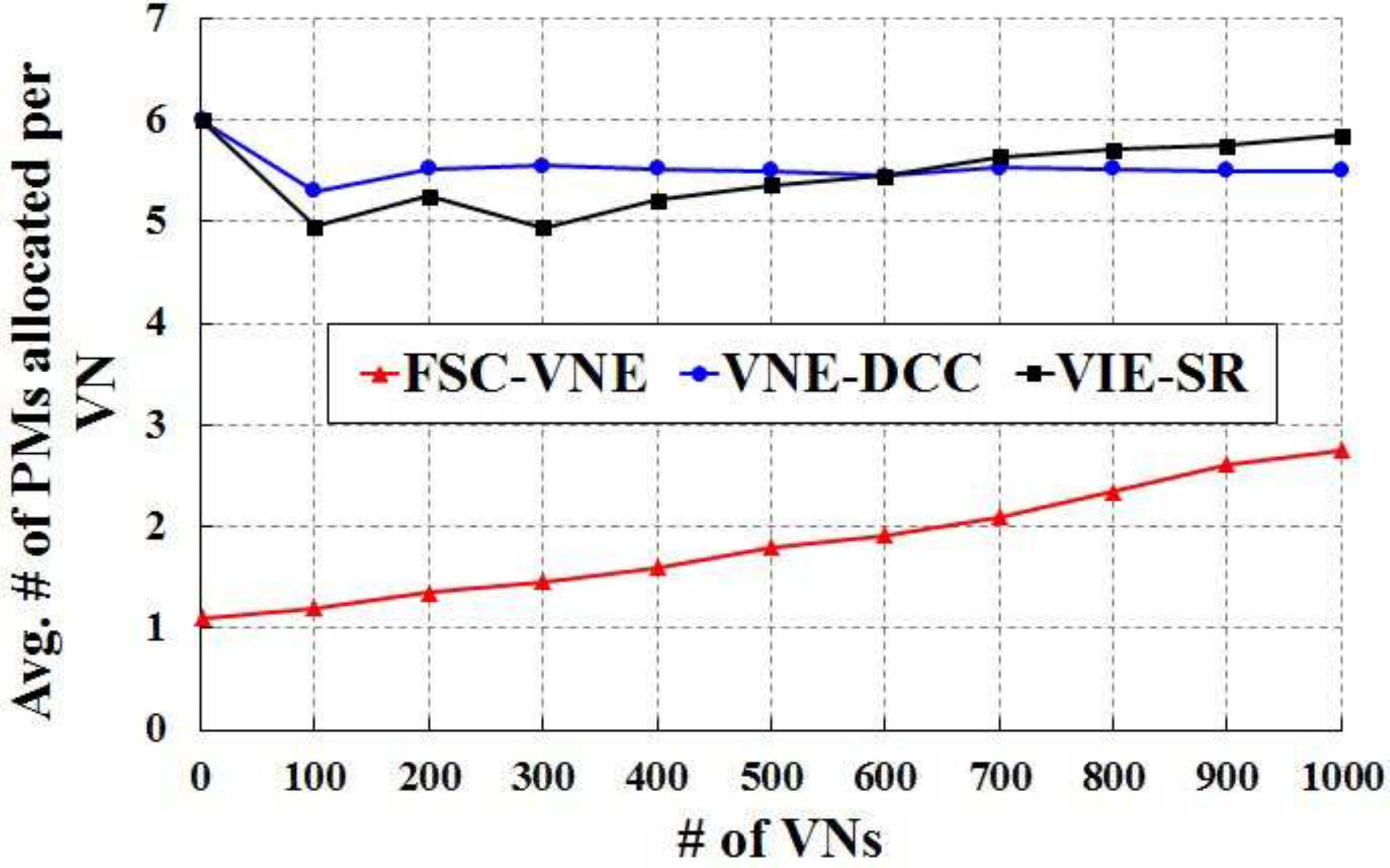}
        \label{fig:sim:avg_PMReq}
    }
    \subfigure[Average number of links required]
    {
        \includegraphics[width=0.30\linewidth]{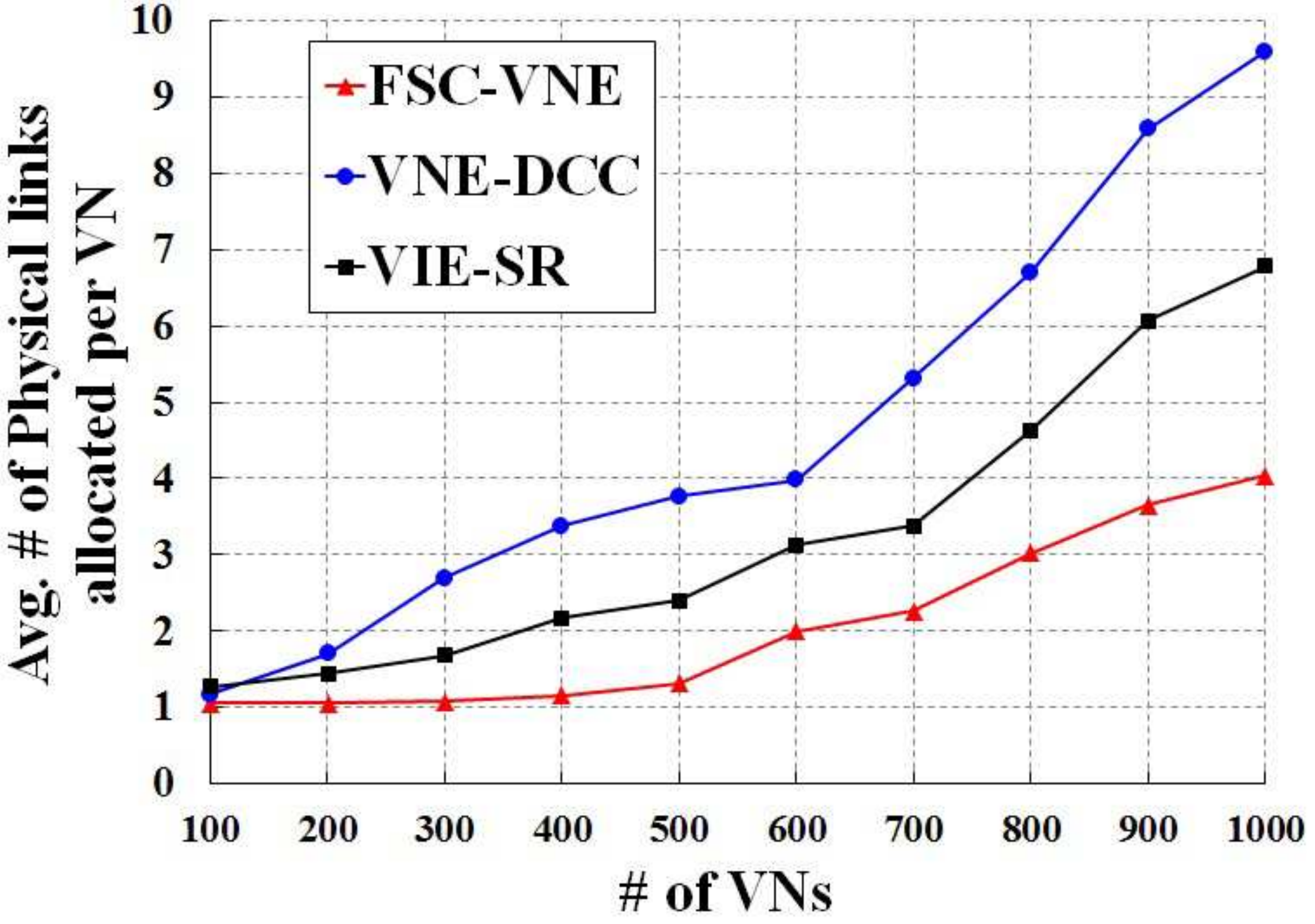}
        \label{fig:sim:avg_LinkReq}
    }
    \subfigure[Average number of switches required]
    {
        \includegraphics[width=0.30\linewidth]{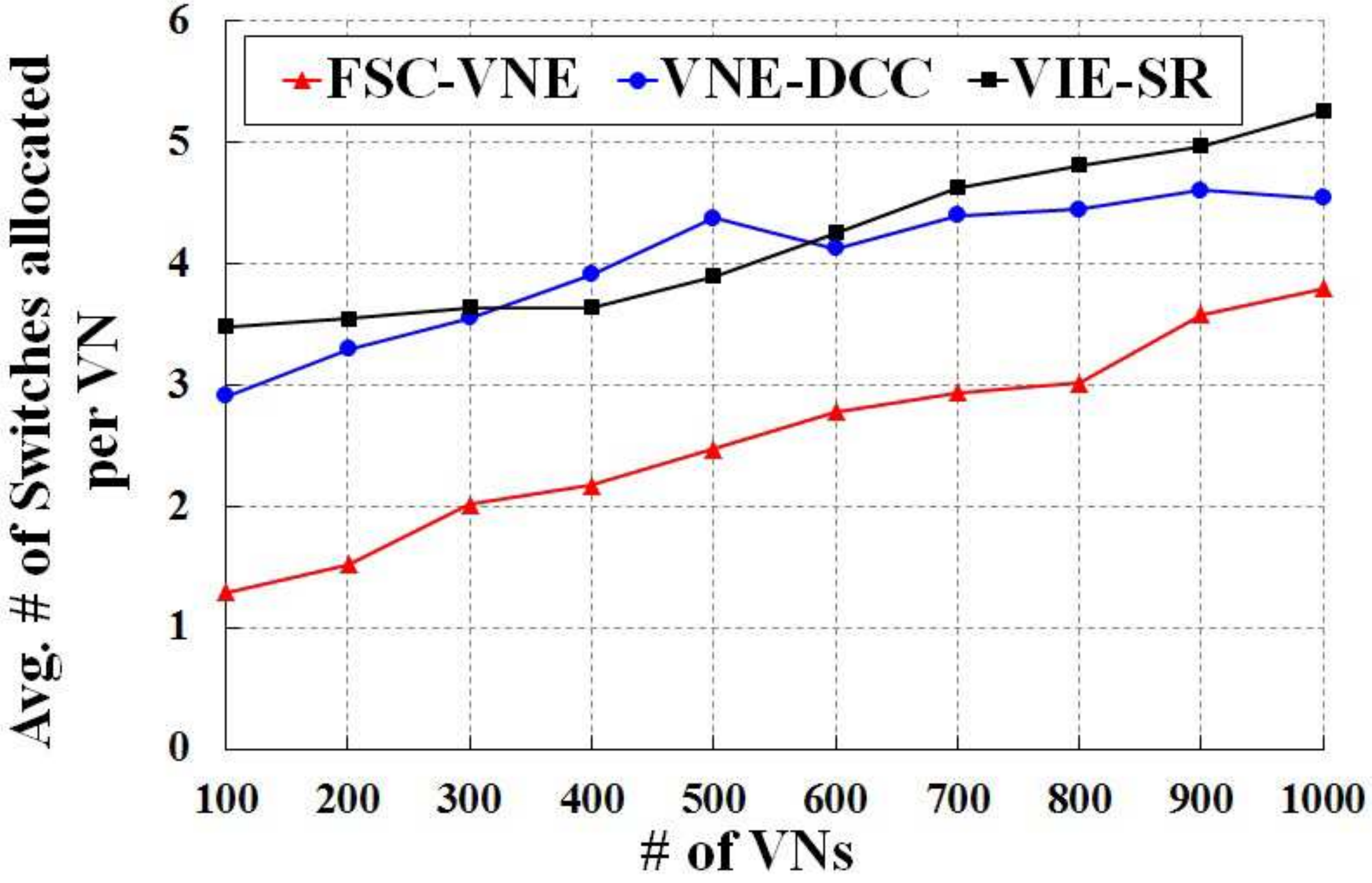}
        \label{fig:sim:avg_SwitchReq}
    }
    \vspace{-2mm}
    \caption{Average number of resource required.}\vspace{-5mm}
    \label{fig:sim:avg_NWresReq}
\end{figure*}
\subsection{Simulation Results}
The performance of proposed FSC-VNE algorithm is evaluated by
comparing with that of VNE-DCC and VIE-SR algorithm. Fig.
\ref{fig:sim:acceptanceRatio} demonstrates the relationship
between the number of VNs received and the acceptance ratio. It
can be observed that, with the increasing number of VNs, the
chances for a VN to be accepted decreases. In terms of acceptance
ratio, the proposed algorithm outperforms over other VNE
algorithms. In case of FSC-VNE algorithm, the acceptance ratio for
$2000$ VNs is about $0.33$, whereas in case of other algorithms
the acceptance ratio is less than $0.3$.

The relationship of number of VNs with the amount of allocated
physical resources is presented in Fig.
\ref{fig:sim:avg_NWresReq}. Here, the physical resource refers to
as the number of PMs, and allocated number of physical links and
switches. The proposed algorithm allows the multiple VMs from one
VN to be hosted by the single PM resulting in the reduction of the
number of required physical links and switches. Keeping the total
number of PMs constant, the average number of PMs allocated to
$100$ VNs is $1.1$, as shown in Fig. \ref{fig:sim:avg_PMReq}. The
number of PMs increases to $2.7$, when the number of VNs increases
to $1000$. On the contrary, it is observed that the average number
of PMs allocated is equal to the number of VMs in case of other
algorithms. The average number of PMs allocated to host $100$ VNs
is $5.5$ and $4.9$ in case of VNE-DCC and VIE-SR algorithm,
respectively. However, the average number of PMs required to embed
$1000$ VNs is $5.5$ and $5.8$ in case of VNE-DCC and VIE-SR
algorithm, respectively. Such variation along the Y-axis occurs
due to the random distribution of the number of VMs in each VN.

\begin{figure}[h]
    \centering
    \subfigure[Mean embedding time]
    {
        \includegraphics[width=0.48\linewidth]{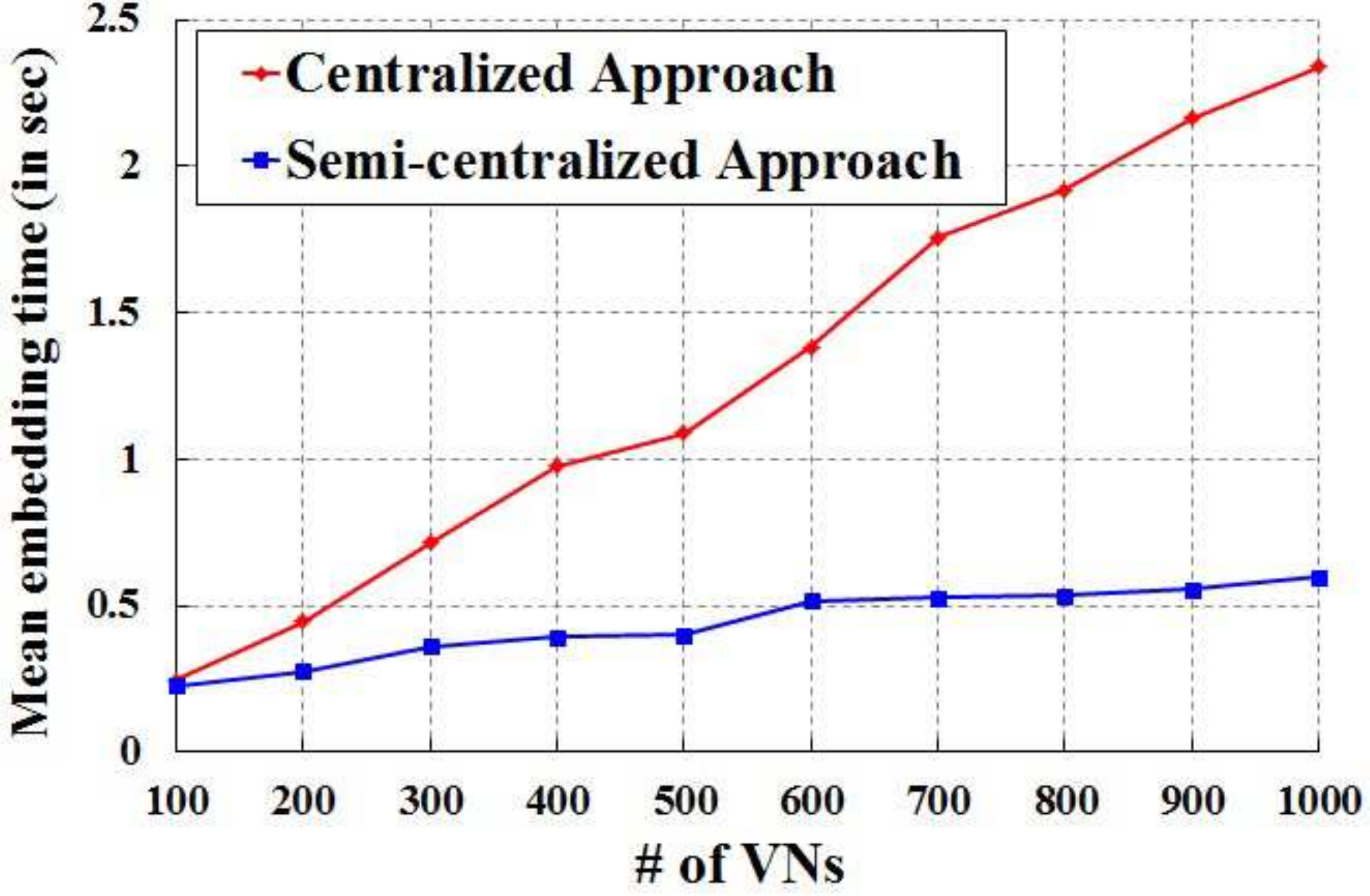}
        \label{fig:sim:embedTime_centra_semiCentra}
    }
    \subfigure[Mean waiting time]
    {
        \includegraphics[width=0.45\linewidth]{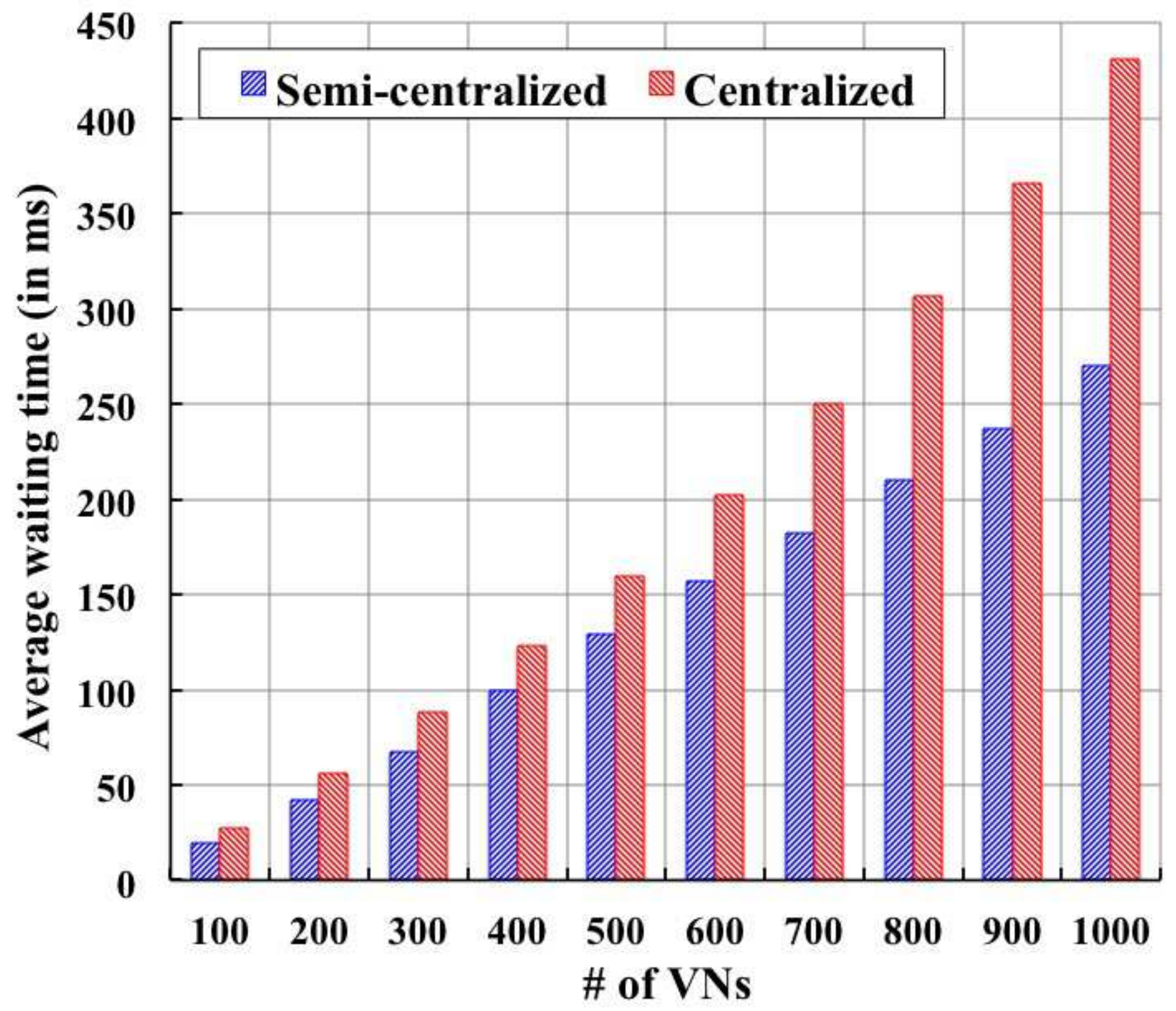}
        \label{fig:sim:cen_semiCent:waitTime}
    }
\\
    \subfigure[Mean throughput]
    {
        \includegraphics[width=0.47\linewidth]{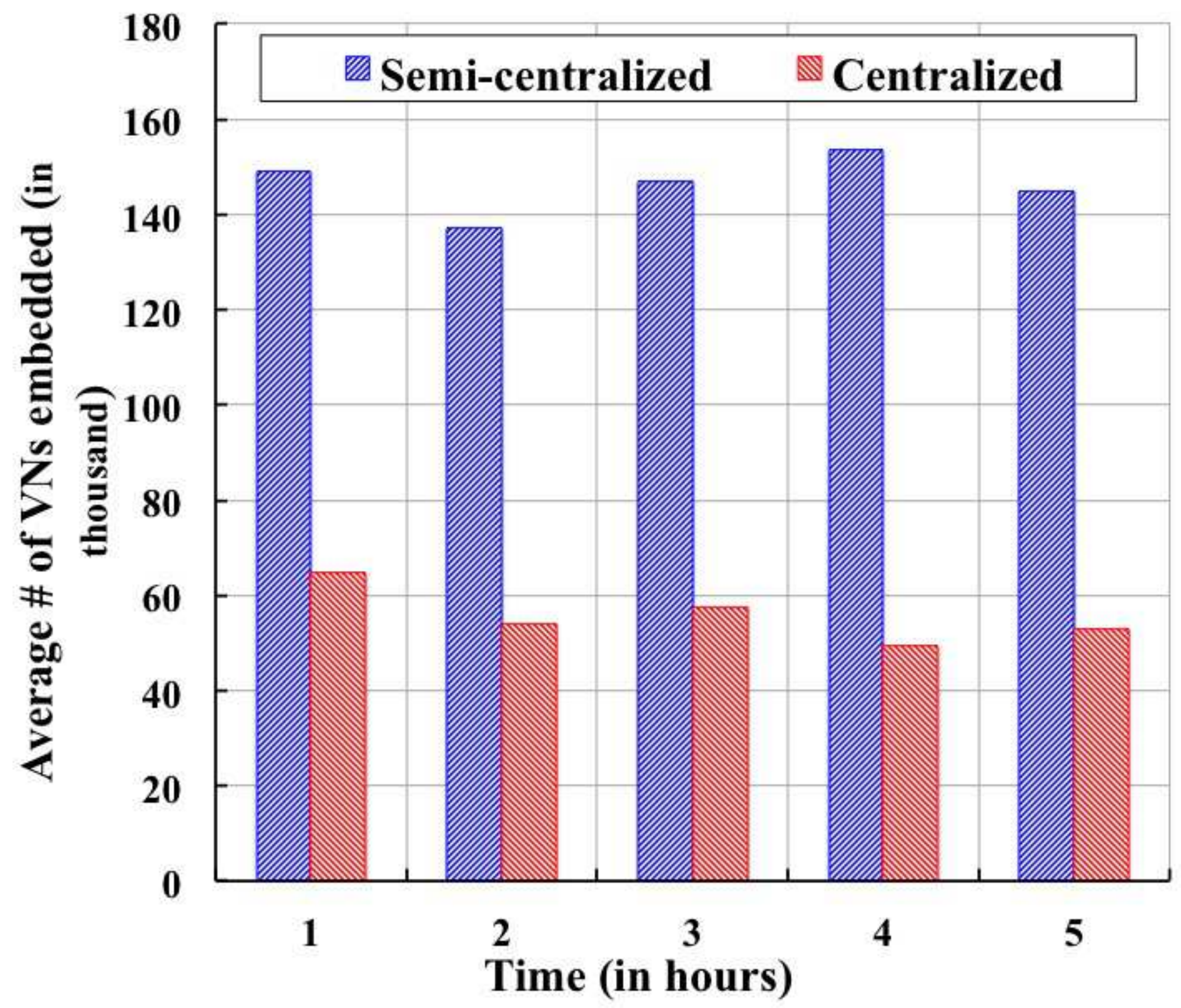}
        \label{fig:sim:cen_semiCent:throughput}
    }
    \subfigure[Mean failure probability]
    {
        \includegraphics[width=0.47\linewidth]{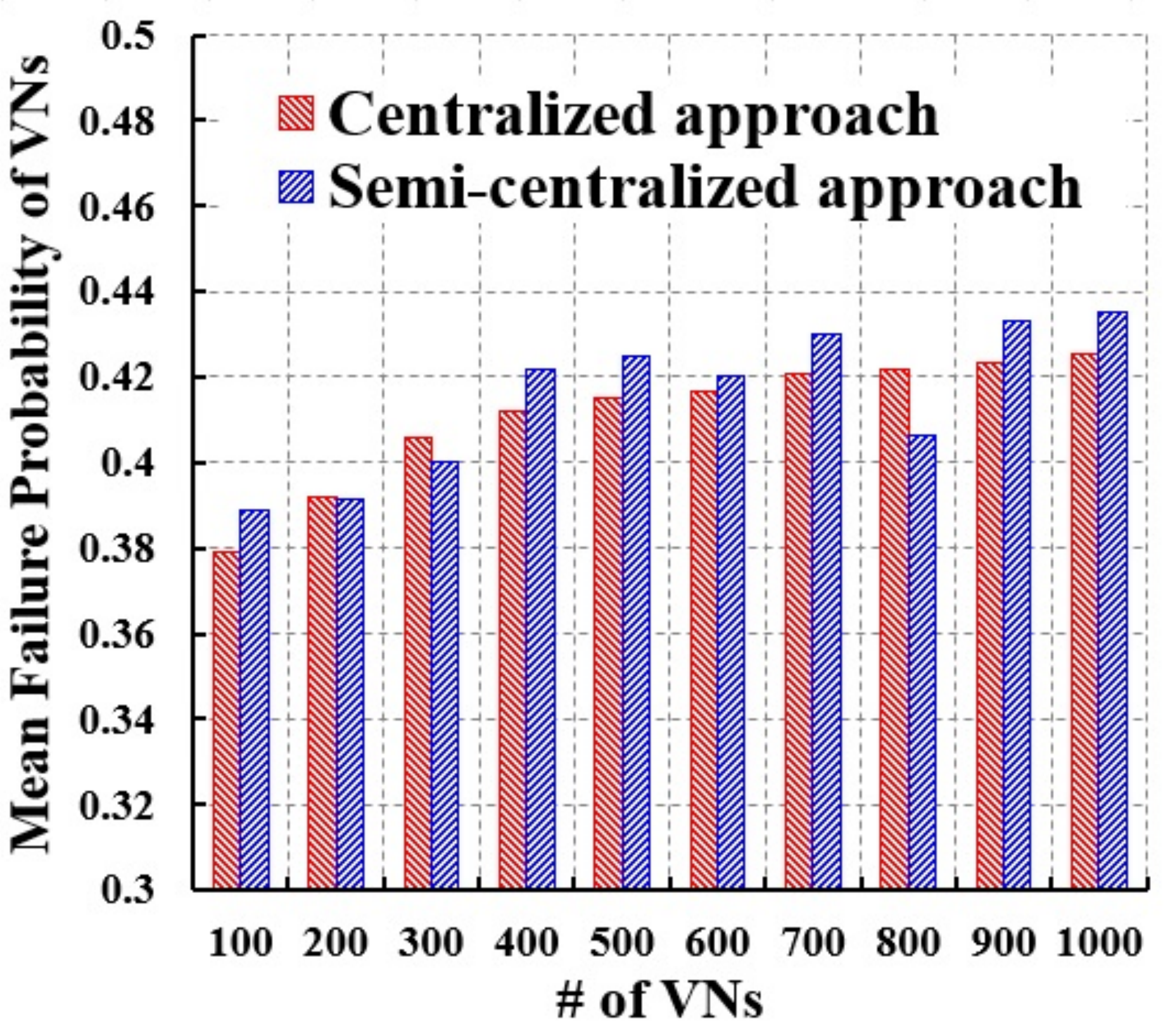}
        \label{fig:sim:cen_semiCent:meanfailprob}
    }
    \vspace{-2mm}
    \caption{Comparison between semi-centralized and centralized version of FSC-VNE.}
    \vspace{-4mm}
    \label{fig:sim:cen_semiCent}
\end{figure}

The advantage of embedding multiple VMs that belong to same VN
onto a single PM can be realized in Fig. \ref{fig:sim:avg_LinkReq}
and \ref{fig:sim:avg_SwitchReq}. For upto $100$ VNs the average
number of physical links required is approximately $1$  in case of
all three algorithms, as shown in Fig. \ref{fig:sim:avg_LinkReq}.
However, this number increases with the increasing number of VNs.
The performance of the proposed algorithm can be distinguished
from other algorithms easily when the number of VNs increases to
$1000$, where the required number of physical links is $4.0$,
$9.5$, and $6.78$ in case of FSC-VNE, VNE-DCC, and VIE-SR
algorithm, respectively. Similar trend is observed in Fig.
\ref{fig:sim:avg_SwitchReq}. The average number of switches
involved in embedding the VNs ranges through $100$ and $1000$ is
$1.2$ and $3.7$ switches, in case of the proposed FSC-VNE
algorithm. However, in case of VNE-DCC and VIE-SR algorithm, the
average number of switches required to embed $100$ VNs is $2.9$
and $3.4$, respectively. When the number of VNs increases to
$1000$, the average number of required switches  increases to
$4.5$ and $5.2$ switches, respectively.

Embedding time is one of the major performance metrics that needs
to be considered while analyzing the performance of the algorithm.
Here, embedding time refers to as the time taken by the algorithm
in order to embed the incoming VN. The proposed algorithm follows
the semi-centralized approach in order to fasten the embedding
process. Such approach allows the CSP to increase the number of
PMs without compromising the performance of embedding algorithm.
The comparison of performance of  centralized version and the
semi-centralized version of the proposed algorithm is presented in
Fig. \ref{fig:sim:embedTime_centra_semiCentra}. The number of PMs
is kept constant at $8192$ in the simulation. From the  simulation
results, it can be clearly observed that the embedding time in
case of semi-centralized approach is lesser than that of the
centralized approach. The average time required to embed $100$ VNs
is $0.24sec$, and $0.22sec$,  in case of centralized and
semi-centralized approach, respectively. However, the average time
required to embed $1000$ VNs is $2.34sec$ and $0.60sec$.

\begin{figure}[h]
    \begin{center}
        \epsfig{file=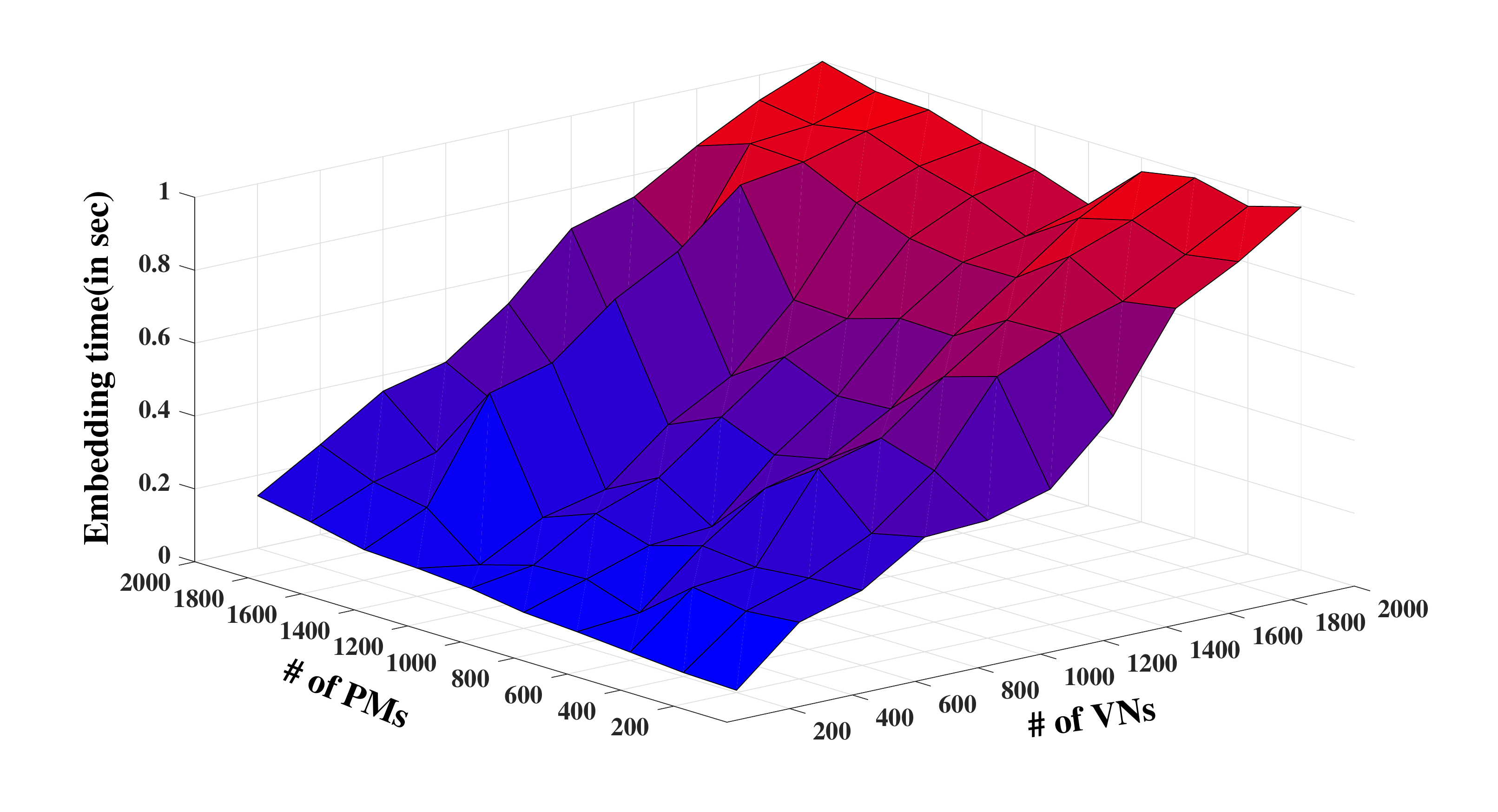,width=85mm}
       \vspace{-3mm}
        \caption{Embedding time with varying number of VNs and PMs.}
        \vspace{-5mm}
        \label{fig:sim:embedTime}
    \end{center}
\end{figure}

The small improvement in the simulation Fig.
\ref{fig:sim:embedTime_centra_semiCentra} is obtained from the
small scale simulated environment. However, the benefits of this
small improvement in embedding time can be realized in the
real-time environment, where the number of PMs can reach 1 million
and thousands of VNs may arrive at the data center in every
second.

Waiting time of VNs and throughput of the FSC-VNE are considered
as another matrix to evaluate the performance of the
semi-centralized approach of the VN embedding as shown in Fig.
\ref{fig:sim:cen_semiCent:waitTime} and
\ref{fig:sim:cen_semiCent:throughput}, respectively. In Fig.
\ref{fig:sim:cen_semiCent:waitTime}, waiting time refers to as the
time spent by the VN before embedding onto the physical network.
\rone{As shown in Fig.} \ref{fig:sim:cen_semiCent:waitTime},
\rone{average waiting time indicates if the proposed algorithm can
achieve its goal of minimizing the waiting time of the VNs and can
reduce the embedding time. Reducing the waiting time and fastening
the embedding process also improves the QoS of the cloud service
provider.} By embedding $100$ numbers of VNs, the average waiting
time in our proposed semi-centralized and centralized algorithms
is $20 ms$ and $27 ms$, respectively. In the semi-centralized
version, the mean waiting time increases to $270 ms$, when the
number of VNs increases to $1000$.

On the other hand, in the
centralized approach, the mean waiting time increases to more than
$400 ms$, when the number of VNs increases to $1000$. \rone{As
shown in Fig.} \ref{fig:sim:cen_semiCent:throughput},\rone{ the
throughput that refers to as the number of VNs embedded per unit
time, plays a major role in evaluating the performance of the
embedding algorithm as reduction in throughput directly affects
the users' experience.} The mean throughput of the FSC-VNE
algorithm is evaluated in each hour for 5 hours. It is observed
that the semi-centralized approach can embed more number of VNs as
compared to the corresponding centralized approach. In the entire
$5$ hours of time duration, the average number of embedded VNs
ranges between $140,000$ and $160,000$ using the semi-centralized
approach, whereas the average number of VNs ranges between
$50,000$ and $60,000$ using the centralized approach. From the
above-mentioned performance metrics as shown in Fig.
\ref{fig:sim:cen_semiCent}, it is observed that the
semi-centralized approach outperforms over the centralized one.
The centralized and semi-centralized approaches of the proposed
algorithm are also compared in terms of mean failure probability,
as shown in Fig. \ref{fig:sim:cen_semiCent:meanfailprob}. It is
observed that the proposed semi-centralized approach has
negligible impact onto the objective. In case of centralized
approach, the mean failure probability lies between $0.36$ and
$0.44$, when the number of VNs increases from 100 to 1000. Similar
trend is observed in case of semi-centralized algorithm. The mean
failure probability lies between $0.38$ and $0.44$ with the same
the number of VNs. It is due to the fact that semi-centralized
approach is adopted primarily to improve the performance of the
algorithm.

\begin{figure}[h]
    \begin{center}
        \epsfig{file=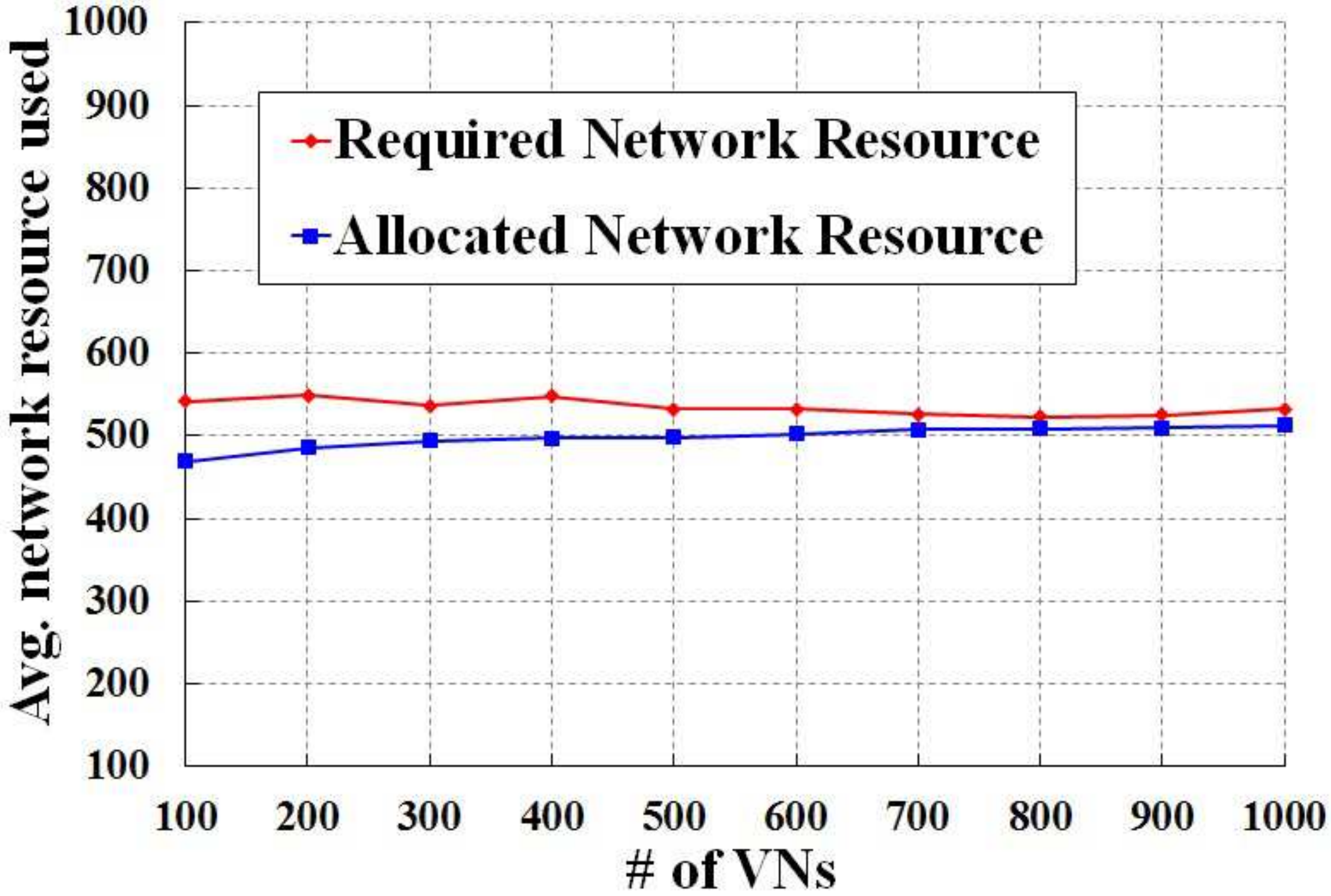,width=65mm}
        \vspace{-2mm}
        \caption{Comparison of required bandwidth and allocated bandwidth.}\vspace{-5mm}
        \label{fig:sim:nw_resrc_used}
    \end{center}
\end{figure}

\begin{figure}[h]
    \begin{center}
        \epsfig{file=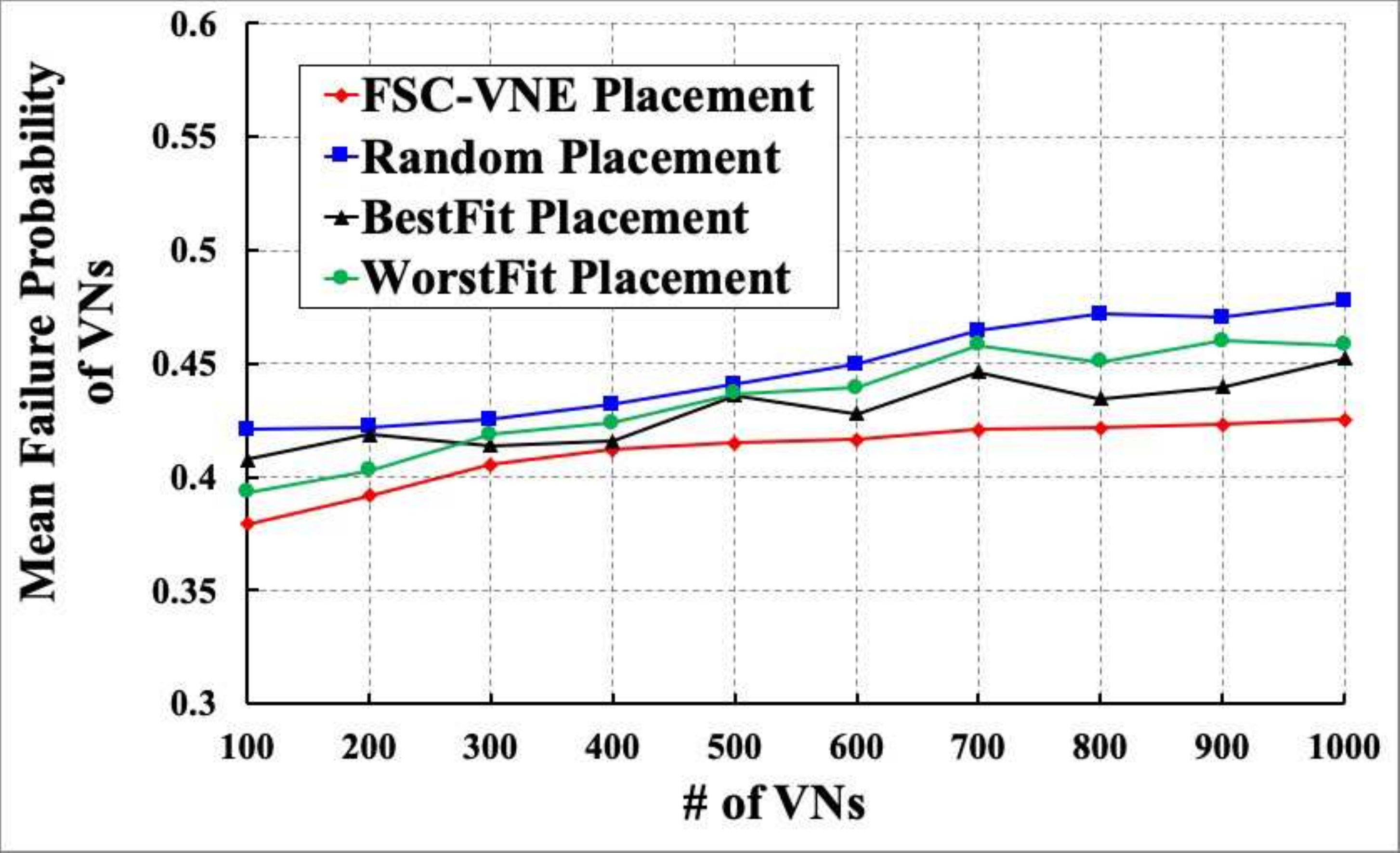,width=65mm}
        \vspace{-2mm}
        \caption{Mean failure probability of VNs.}\vspace{-5mm}
        \label{fig:sim:mean_failProb}
    \end{center}
\end{figure}

Fig. \ref{fig:sim:embedTime} shows the relationship of number of
PMs and number of VN with the embedding time. It is observed that
the  embedding time may increase with increasing number of VNs and
the PMs. The average time required to embed $200$ VNs onto $200$
PMs is $0.005sec$. However, the time increases to approximately
$1sec$ to embed $2000$ VNs onto the same number of PMs, which is
mainly due to the first stage of embedding process as it follows
the centralized approach.

The bandwidth demand of the virtual links can be ignored by
mapping the adjacent VMs onto the single PM. As a result, the
total amount of required bandwidth can be more than the total
amount of allocated bandwidth as shown in Fig.
\ref{fig:sim:nw_resrc_used}. It is observed that the average
network bandwidth demanded by $100$ VNs is $542units$. However,
only $469 units$ of network resource is actually allocated. The
same trend is observed even when the number of VNs is increased.
The average network resource allocated for $1000$ VNs is $513
units$ for the demand of $532 units$ of network resources.

\begin{figure}[ht]
    \begin{center}
        \epsfig{file=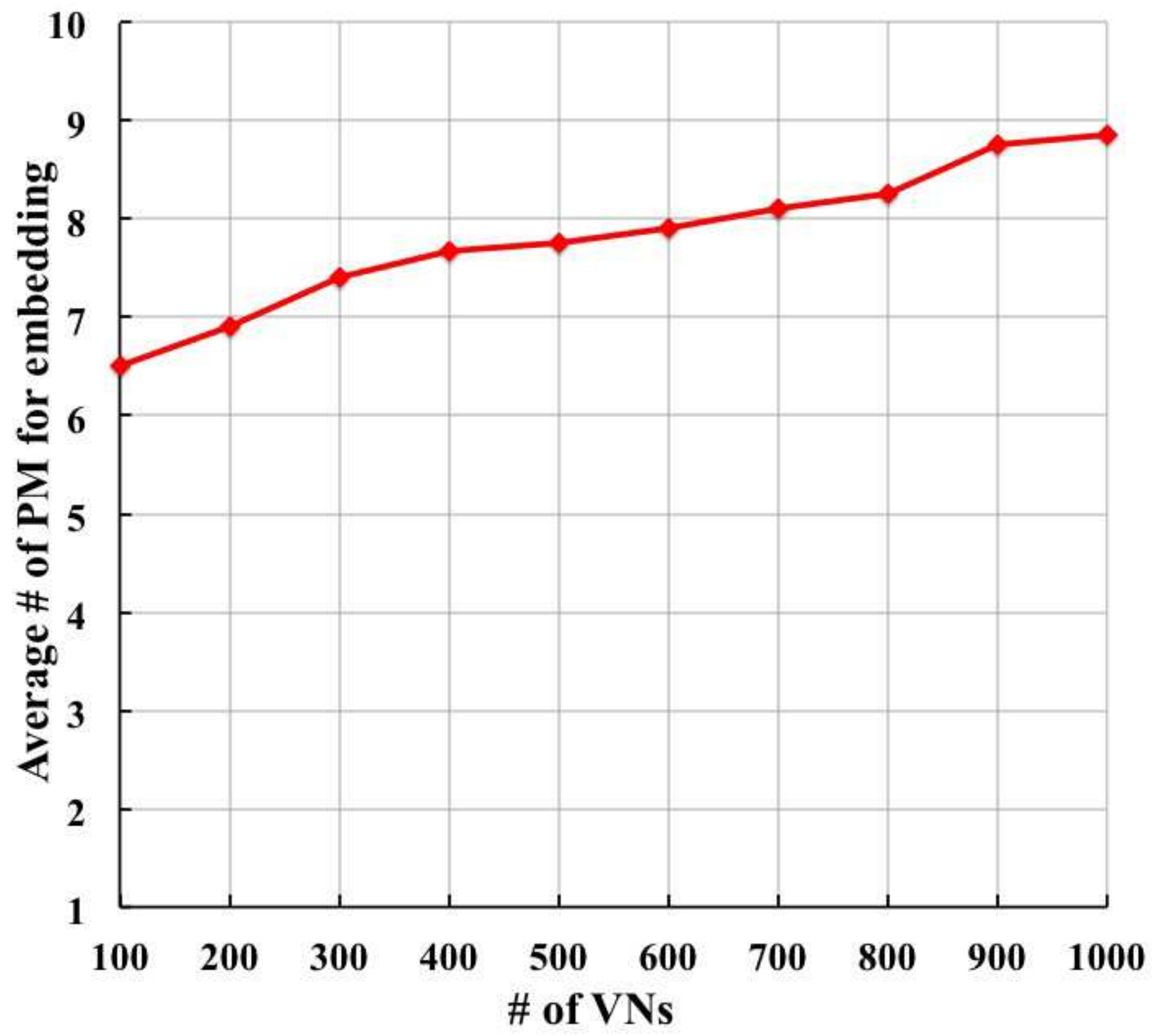,width=60mm}
           \vspace{-3mm}
        \caption{Average number of PMs used for embedding purpose.}\vspace{-5mm}
        \label{fig:sim:embed_resrc}
    \end{center}
\end{figure}

In the second stage of the proposed FSC-VNE algorithm, the PMs are
given higher priority based on the failure probability. The PMs
with lesser failure probability pick-out the VMs first.
Eventually, the VMs are mapped onto the PMs with less failure
probability. The simulation results in Fig. \ref{fig:sim:mean_failProb} demonstrates the comparison of FSC-VNE placement schemes with three other placement strategies: random, BestFit, and WorstFit. In the BestFit placement strategy, PM with smallest available resource picks-out the VMs. Contrary to this, the PM with largest available resource picks-out the VMs in WorstFit strategy in case of FSC-VNE. In random placement, the PMs randomly pick-out the VMs. In all the three placement strategies, the PMs do not consider their failure probability while choosing a VM. Here, the mean failure probability of the VN refers to as the mean of failure probability of the PMs that are hosting the corresponding VNs. For $100$ number of VNs, the mean failure probability is approximately $0.38$ in case of FSC-VNE algorithm. Whereas, the mean failure probability ranges between $0.39$ and $0.42$ for the same number of VNs in all other placement strategies. It is observed that this value increases significantly to $0.41$ in case of FSC-VNE, when the number of VNs increases to $300$. Following this, a gradual increase is observed for further increase in the number of VNs. The mean failure probability increases to approximately $0.42$ when $1000$ number of VNs are embedded. On the other hand, a larger mean failure probability is observed in case of other placement strategies, which ranges between $0.45$ and $0.48$ after $1000$ VNs are embedded. Since those three placement strategies do not consider the failure probability in the embedding process, there mean failure probability values fluctuate, as shown in Fig. \ref{fig:sim:mean_failProb}.

The major disadvantage of the semi-centralized approach is the
resource requirement to embed the VNs. In Fig. \ref{fig:sim:embed_resrc}, the
number of PMs used to embed each VN is presented. It is observed
that the average number of PMs ranges between $6$ and $9$ for $1000$ VNs. This is due to the fact that a single server is responsible for embedding all incoming VNs in centralized approach. However, multiple servers are employed to embed each VN in semi-centralized approach. This disadvantage can be handled by restricting the number of PMs to a lower value. Further, this can also be ignored if a significant improvement in embedding is observed while processing a large number of VNs.

\begin{figure}[ht]
    \begin{center}
        \epsfig{file=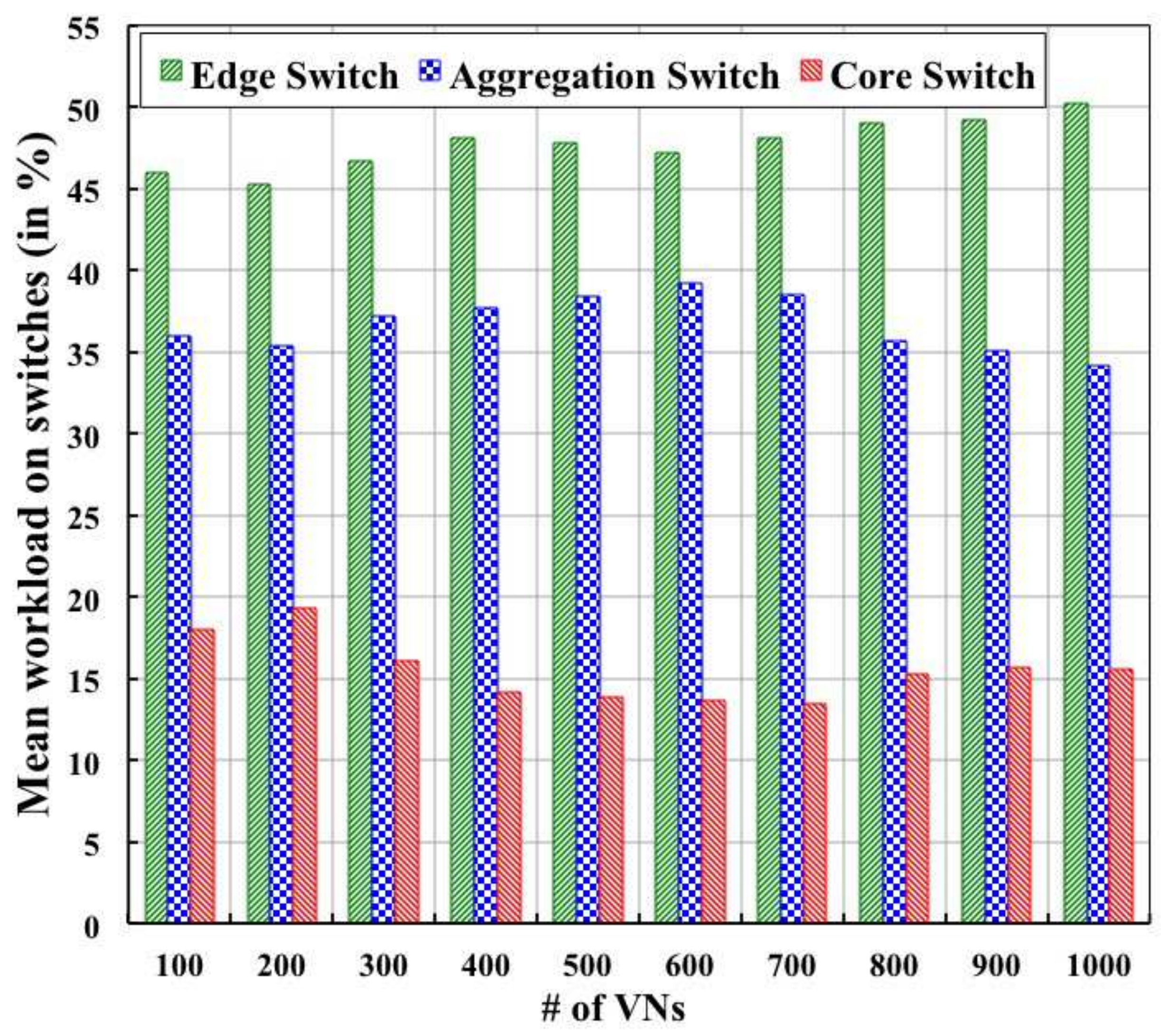,width=60mm}\vspace{-3mm}
        \caption{Workload on different layers of switches.}\vspace{-5mm}
        \label{fig:sim:switch_workload}
    \end{center}
\end{figure}

It is essential to observe the workload on all the three layers of
switches, i.e., edge switches, aggregation switches, and core
switches. In Fig. \ref{fig:sim:switch_workload}, X-axis represents
the number of VNs and Y-axis represents the workload on different
layers of switches. Here, workload refers to as percentage of
communication of a VN being carried out by particular type of
switch. It is observed that for $100$ VNs approximately only
$18\%$, $36\%$, and $46\%$ of the communications among VMs
dependent on core switches, aggregation switches, and edge
switches, respectively. Similarly, an average of $16\%$, $34\%,$ and $50\%$ of the VMs communications for $1000$ VNs dependent on core, aggregation, and edge switches, respectively as shown in Fig. \ref{fig:sim:switch_workload}. The workload on core switches is less than that of the aggregation switches, which is further less than that of the edge switches. This is due to the fact that VMs are embedded onto the PMs with closure distance, which involve lesser number of switches.

\section{Conclusions} \label{sec:conclsn}
In this paper, we have investigated the VNE problem on Fat-Tree
data center network and have presented a novel and effective
Failure-aware Semi-Centralized VN Embedding (FSC-VNE) solution. In
the proposed embedding solution, VNs are mapped onto the existing
physical network in two-staged semi-centralized manner by taking
the advantages of both centralized and distributed embedding
approach. Multiple VMs are allowed to be embedded onto the single
PM. Further, the VMs are mapped onto a set of PMs with minimum
distance. This infers that less number of PMs and switches are
used in an embedding solution. The PMs with minimum failure
probability are given higher preference to host the VMs. This
helps us achieve the goal to minimize the virtual resource
failure probability. Besides, extensive simulations are performed
and our simulation results are compared with other similar
algorithms. The simulation results indicate the superiority and
notable strengths of our FSC-VNE over other algorithms in terms of
acceptance ratio, minimization of required physical resource,
embedding time and VN failure probability. For future work, we
endeavor to implement the proposed FSC-VNE solution in real cloud
environment in order to verify and improve the simulation results.

\section*{Acknowledgment}\label{Ack}
This work was supported in part by the Ministry of Science and
Technology (MOST), Taiwan, under Grant 108-2221-E-182-050 and in
part by Chang Gung Medical Foundation, Taiwan under Grant CMRPD
2J0141.


\bibliographystyle{IEEEtran}
\bibliography{references}
\vspace{-10mm}
\begin{IEEEbiography}[{\includegraphics[width=1in,height=1.25in,clip,keepaspectratio]{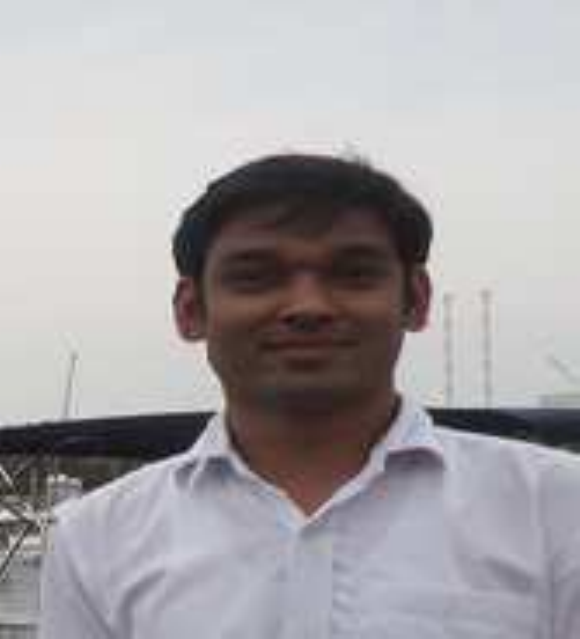}}]{Chinmaya Kumar Dehury}
    Chinmaya Kumar Dehury received bachelor degree from Sambalpur University, India, in June 2009 and MCA degree from Biju Pattnaik University of Technology, India, in June 2013. He received the PhD Degree in the department of Computer Science and Information Engineering, Chang Gung University, Taiwan. Currently, he is a postdoctoral research fellow in the Mobile \& Cloud Lab, Institute of Computer Science, University of Tartu, Estonia. His research interests include scheduling, resource management and fault tolerance problems of Cloud and fog Computing, and the application of artificial intelligence in cloud management. He is also reviewer to several journals and conferences, such as IEEE TPDS, IEEE JSAC, Wiley Software: Practice and Experience, etc.
\end{IEEEbiography}
\vspace{-5mm}
\begin{IEEEbiography}[{\includegraphics[width=1in,height=1.25in,clip,keepaspectratio]{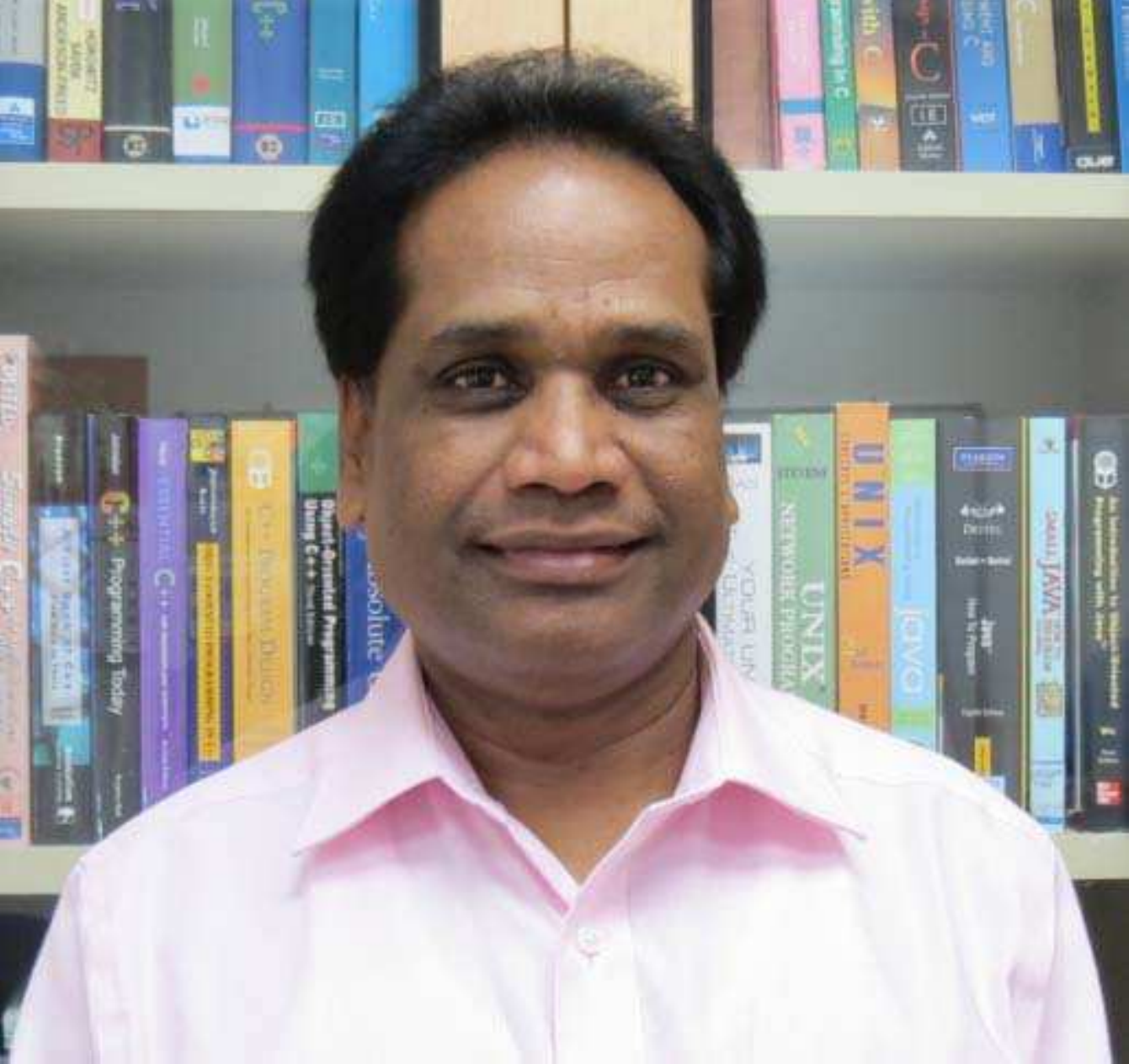}}]{Prasan Kumar Sahoo}
Prasan Kumar Sahoo received the B.Sc. degree in physics (with
Honors), the M.Sc. degree in mathematics both from Utkal
    University, Bhubaneswar, India, in 1987 and 1994, respectively. He
    received the M.Tech. degree in computer science from the Indian
    Institute of Technology (IIT), Kharagpur, India, in 2000, the
    first Ph.D. degree in mathematics from Utkal University, in 2002,
    and the second Ph.D. degree in computer science and information
    engineering from the National Central University, Taiwan, in 2009.
    He is currently a Professor in the Department of Computer Science
    and Information Engineering, Chang Gung University, Taiwan. He is an Adjunct Researcher in the Division of Colon and
    Rectal Surgery, Chang Gung Memorial Hospital, Linkou, Taiwan since 2018. He has worked as an Associate Professor in
    the Department of Information Management, Vanung University,
    Taiwan, from 2007 to 2011. He was a Visiting Associate Professor in the Department of Computer Science, Universite Claude Bernard Lyon 1, Villeurbanne,
    France. His current research interests include artificial
    intelligence, big data analytic, cloud computing, and IoT. He is
    the Lead Guest Editor, special issue of Electronics journal and an Editorial Board Member for the International Journal of Vehicle
    Information and Communication Systems. He has worked as the
    Program Committee Member of several IEEE and ACM conferences and is a senior member, IEEE.
\end{IEEEbiography}

\end{document}